\documentclass[lettersize,onecolumn]{IEEEtran}
\usepackage{booktabs}
\usepackage{amsmath}
\usepackage{amssymb}
\usepackage{caption}
\usepackage{algorithmic}
\usepackage{todonotes}
\usepackage{amsfonts}
\usepackage{cite}
\usepackage{graphicx}
\usepackage{array}
\usepackage[shortlabels]{enumitem}
\usepackage{soul}
\newtheorem{definition}{Definition}[section]
\newtheorem{theorem}[definition]{Theorem}
\newtheorem{corollary}[definition]{Corollary}
\newtheorem{lemma}[definition]{Lemma}

\newtheorem{example}[definition]{Example}
\newtheorem{proposition}[definition]{Proposition}
\newtheorem{remark}[definition]{Remark}

\begin{document}
\def\Out{{\rm Out}}
\def\In{{\rm In}}
\title{On Minimum Distances for Error Correction and Detection of Generalized Network Code}
%
%
%

\author{Yulin~CHEN
        
        Raymond~W.~Yeung,~\IEEEmembership{Fellow,~IEEE}
\thanks{Yulin Chen is with the Department
of Information Engineering, The Chinese University of Hong Kong, Hong Kong, SAR, China e-mail: (1155189560@link.cuhk.edu.hk)}
\thanks{Raymond\ W.\ Yeung is with the Department of Information Engineering and the Institute of Network Coding, The Chinese University of Hong Kong, Hong Kong, SAR, China (e-mail:whyeung@ie.cuhk.edu.hk)}
}
\maketitle

\begin{abstract}
It is well known that the minimum distance for linear network codes plays the same role as the minimum distance for classical error control codes. However, Yang {\em et al.} (2008) discovered that for nonlinear network codes, the minimum distance for error correction is not always the same as the minimum distance for error detection. In this paper, we define two distances for {\em joint} error correction and detection, one being a refinement of the other, and obtain their basic properties. Then we study the relations among these two distances and the two previously defined distances for error correction and error detection. In particular, it is proved that in the case of a linear network code, only one minimum distance is needed to completely characterize the various capabilities of the code for error correction and detection. At the end of the paper, we introduce the \emph{generalized network channel} as an abstraction of a network from the input-output point of view. This enables our results to be applied in a very general setting, which induces the rank metric code as a special case.
\end{abstract}

\begin{IEEEkeywords}
Nonlinear coherent network coding, error correction, error detection, joint error correction and detection, minimum distance, generalized network channel, generalized network code.
\end{IEEEkeywords}

%
\IEEEpeerreviewmaketitle

\section{Introduction}
\IEEEPARstart{N}{etwork} error correction was first studied by Yeung and Cai in \cite{neterror,networkerror}. Subsequently, Zhang \cite{minidis} introduced an algebraic definition of minimum distance for network error correction, and Yang {\em et al.} \cite{WeightProperty} introduced a geometrical definition of minimum distance that was recently shown by Guang and Yeung \cite{GY23} to be equivalent to the one introduced in \cite{minidis}. This minimum distance fully characterizes the error correction capability of a {\em linear} network code. 


In addition to network error correction, network error detection was also considered in \cite{WeightProperty}. In particular, it was shown that for a {\em nonlinear} network code, two different distances are needed for characterizing its capabilities for error correction and error detection. This leads to the surprising discovery that for a nonlinear network code, unlike a classical block channel code, the number of correctable errors can be more than half of the number of detectable errors. Moreover, a framework that captures joint error correction and detection was introduced.

 In the above line of research, it is assumed that the network can be used once, and both the transmitter and the receiver have full knowledge of the network. 
 In another line of research started by Koetter and Kschischang \cite{subcode}, which was inspired by random linear network coding \cite{rlnc},
it is assumed that the network can be used multiple times, but the transmitter and the receiver
 do not need to have any knowledge on the network and its transfer characteristics except that the underlying network code
 is linear. This line of research is referred to as {\em noncoherent} (linear) network coding, in order to distinguish it from the original line of research on network coding.  In this paper, a network code is assumed to be coherent unless otherwise specified.
 

The current paper is a continuation of the studies in \cite{WeightProperty}. Our contributions are two-fold:
\begin{enumerate}
\item 
We formulate the transmission system studied in \cite{WeightProperty} as the {\em generalized network channel} and the {\em generalized network code}. We study the interplay between different error control functions (error correction, error detection, and joint error correction and detection) of a generalized network code by establishing fundamental bounds on the distances associated with these error control functions.
\item 
We formulate the notions of {\em error-linearity} for a generalized network channel and {\em linearity} for a generalized network code. In particular, all the following can be regarded as special cases of a linear generalized network code over an error-linear generalized network channel:
\begin{enumerate}
	\item	
	a block code in classical channel coding
	\item 
	a coherent linear network code
	\item 
	a rank metric code
	\item 
	a sum-rank metric code.
\end{enumerate}
We prove that for an error-linear generalized network channel, the distances for error correction, error detection, and joint error correction and detection coincide with each other, generalizing the corresponding result in \cite{WeightProperty}.
\end{enumerate}

The rest of the paper is organized as follows. In Section~\ref{sec7} we define the generalized network channel and the generalized network code. In Section~\ref{sec8}, four codes from the classical channel coding and network coding are casted into the framework introduced in Section~\ref{sec7}. In Section~\ref{sec4}, we obtain a lower bound on 
the minimum error correction distance in terms of the minimum error detection distance. Section~\ref{sec5} is a discussion on joint network error correction and detection that contains the main results of this paper. In this section, we introduce a distance and its refined version for joint error correction and detection and prove some of their basic properties. We also obtain various bounds between the distances for error correction, error detection, and joint error correction and detection. Section~\ref{sec6} uses the toy example in \cite{WeightProperty} to illustrate the results in the previous sections. The paper is concluded in Section~\ref{sec9}.

 \vspace{2mm}

\section{Generalized Network Channel and Code}\label{sec7}

In \cite{WeightProperty},  error correction and error detection were studied for a general transmission system which includes network coding as a special case. It was shown that for a {\em nonlinear} network code, two different distances are needed for characterizing its capabilities for error correction and error detection. An example of a nonlinear network code was given to illustrate that these two distances can in fact be different. Specifically, this nonlinear network code can correct one error and detect one error. This example is very intriguing because in classical channel coding, the number of errors that can be corrected is at most half of the number of errors that can be detected.

On the other hand, it was proved in \cite{WeightProperty} that for a coherent linear network code, the distances for error correction and error detection coincide with each other, so that we only need one distance to characterize the capabilities of such a code for error correction and error detection.

In Section~\ref{sec5} (Theorem~\ref{d_2theo}), we will establish the same result but in a much more general context. To start with, we formulate in this section the transmission system studied in \cite{WeightProperty} as the generalized network channel and the generalized network code. In the next section, we will show that our formulation includes a block code in classical channel coding, a coherent linear network code, a rank metric code, and a sum-rank metric code as special cases.

Since the network topology and coding coefficients can be represented only by the transfer function $\tilde{F}$, 
we are motivated to introduce the {\em generalized network channel} as an abstraction of a network from the input-output point of view, with the aim to establish a more general coding scheme that focuses on whether the channel is ``linear''. 

 To formulate the generalized network channel, we let ${\cal C}$ and ${\cal E}$ be the set of codewords and the set of errors, respectively. The set ${\cal C}$ and the error set ${\cal E}$ can be arbitrary. The error set ${\cal E}$ is equipped with a weight measure $w: {\cal E} \rightarrow \mathbb{N} \cup \{ 0 \}$ that satisfies the following property:
\begin{enumerate}
\item
$w(z) \ge 0$ for all $z \in {\cal E}$, with equality if and only if $z = 0$. The error $z=0$ is called the {\em zero error}.
\end{enumerate}


Let ${\cal Y}$ be the output set of the generalized network channel, which can be arbitrary.
The generalized network channel is specified by a transfer function
${F}: {\cal C} \times {\cal E} \rightarrow {\cal Y}$. This means that when a codeword $x \in {\cal C}$ is transmitted and an
error $z \in {{\cal E}}$ occurs, the channel outputs a received word $y$, where $y \in {\cal Y}$. In order to be able to recover the transmitted codeword from the received word in the absence of error, i.e., when $z=0$, we assume that for $x,x' \in {\cal C}$ with $x \ne x'$,
\begin{equation}\label{basicequation}
    {F}(x,0) \ne {F}(x',0).
\end{equation}

When there is no ambiguity, we may also refer to the generalized network channel as ${F}$.

A generalized network channel is called {\em error-linear} if 
the following properties are satisfied:
\begin{enumerate}
\setcounter{enumi}{1}
\item ${\cal E}$ is a group, with the binary operator denoted by `$\circ$', and the identity element, namely the  zero error, denoted by 0;
\item 
${\cal Y}$ is an abelian group, with the commutative binary operator denoted by `$\oplus$', and the identity element, namely the \emph{zero received word}, denoted by $0$;\footnote{The identity elements of $\mathcal{E}$ and $\mathcal{Y}$ are both denoted by $0$, which should incur no ambiguity.}
\item \label{qwp9uhg}
$ F$ is a ``linear function'' in the sense that ${F}(x,z) = {f}(x) \oplus {h}(z)$, where ${f}: {\cal C} \rightarrow {\cal Y}$, ${h}: {\cal E} \rightarrow {\cal Y}$,
and 
\[
{h}(z \circ z') = {h}(z) \oplus {h}(z') ,
\]
for all $z, z' \in {\cal E}$, i.e., ${h}$ is a group homomorphism; 

\item 
(Triangular inequality) $w(z \circ z') \le w(z) + w(z')$ for all $z, z' \in {\cal E}$;
\item 
$w(z^{-1}) = w(z)$ for all $z \in {\cal E}$, where $z^{-1}$ denotes the inverse of $z$ in ${\mathcal{E}}$;
\item 
For any $z \in {\cal E}$ and $c_1, c_2 \ge 0$ such that $c_1+c_2 = w(z)$, there exist $z_1, z_2 \in {\cal E}$ such that $w(z_1) = c_1$ and $w(z_2) = c_2$,  and $z_1\circ z_2=z$.
\end{enumerate}
Furthermore, an error-linear generalized network channel is called {\em linear} if the following additional properties are satisfied:
\begin{enumerate}
\setcounter{enumi}{7}
\item 
 ${\cal E}$ and ${\cal Y}$ are vector spaces;
\item 
${f}$ is linear.
\end{enumerate}
If a generalized network channel is not linear, then it is said to be \emph{nonlinear.}

Note that in Property 4), since ${h}$ is a group homomorphism, for all $z\in\mathcal{E}$, we have \[{h}(z)={h}( 0\circ z)={h}(0)\oplus{h}(z),\] so ${h}(0)=0.$ On the other hand, for any $z\in \mathcal{E}$, since \[0={h}(0)={h}(z\circ z^{-1})={h}(z)\oplus {h}(z^{-1}),\] we have \[{h}(z^{-1})=({h}(z))^{-1}.\]
These properties will be useful in our subsequent proofs.

A generalized network code consists of the codeword set ${\cal C}$ and the generalized network channel ${F}$, and is denoted as $(\mathcal
{C},F)$. Furthermore, if $F$ is linear and  
\begin{enumerate}
\setcounter{enumi}{9}
\item $\mathcal{C}$ is a vector space,
\end{enumerate}
then the generalized network code is called linear.

For a linear generalized network channel, Properties~8) and 9) are satisfied. Then the binary operators ``$\circ$'' and ``$\oplus$'' for $\mathcal{E}$ and $\mathcal{Y}$, respectively are the vector additions in the corresponding vector spaces. Furthermore, 
\[
{F}(x,z) = x \cdot A + z \cdot B,
\]
where `$\cdot$' is matrix multiplication and `$+$' is matrix addition, and $A$ and $B$ are matrices of suitable dimensions.

\section{Classical Block Codes, Coherent Network Coding, and Noncoherent Network Coding}\label{sec8}
 In the definition of the generalized network channel in Section \ref{sec7}, the error set ${\mathcal{E}}$ is equipped with a weight measure $w$ that satisfies Property~1). A generalized network code consists of the codeword set ${\cal C}$ and the generalized network channel $F$. 
If the generalized network code is error-linear, then $w$ further satisfies Properties 5) to 7).
Obviously, the Hamming weight satisfies all these properties. 

In the literature of classical coding theory and network coding theory,
there are a lot of weight measures in use, for example, the Hamming weight, the rank weight \cite{koetter_1} and the sum-rank weight \cite{elgamalsum},\cite{sum-rank}. In this section, we demonstrate the usefulness of the generalized framework introduced in Section~\ref{sec7} by showing that the block code through a classical point-to-point channel equipped with the Hamming weight $w_H$, the rank metric code through a non-coherent linear network channel equipped with the rank weight rk, and the sum-rank metric code through a non-coherent linear network channel equipped with the sum-rank metric weight sr, are indeed linear generalized network codes. Besides, we also show that a network code through a coherent linear network channel equipped with the Hamming weight is indeed a linear generalized network code, otherwise, it is a nonlinear generalized network code.

Throughout this paper, we use $\mathbb{F}_q$ to denote the Galois field of order $q$ with $q = p^k$, where $p$ is a prime and $k$ is a positive integer. The field  $\mathbb{F}_{q}$ is called the extension field of $\mathbb{F}_p$ with extension degree $k$. Let $\mathbb{F}_q^m$ denote the vector space of row vectors of length $m$ over the base field $\mathbb{F}_q$, and $\mathbb{F}_q^{m\times n}$ denote the vector space of all $m \times n$ matrices over $\mathbb{F}_q$.\\

\subsection{Block Code over the Classical Point-to-Point Channel}
 Consider a block code of length $n$ over  $\mathbb{F}_{q}$ in classical channel coding. When a codeword $x\in\mathbb{F}_q^n$ is transmitted through a classical point-to-point channel $F$, an error vector $z\in\mathbb{F}_q^n$ may occur, and the vector $y=x+z$ is received. 
This can be cast into the framework of generalized network code as follows:
 \begin{enumerate}[i)]
     \item The codeword set $\mathcal{C}$ is a subset of $\mathbb{F}_q^n$. The error set $\mathcal{E}$ is $\mathbb{F}_q^n$ and the received word set $\mathcal{Y}$ is $\mathbb{F}_q^n$.
     \item The weight measure $w$ is the Hamming weight $w_H$.
     \item The transfer function that represents the point-to-point channel is $F(x,z)=x+z.$
 \end{enumerate}

 In classical channel coding, the distance $D$ that measures the distance between two vectors is the Hamming distance, given by $D(x_1,x_2)=w_H(x_1-x_2)$. In this case, $D(x_1,x_2)$ is a function of $x_1-x_2$. However, as we will see in Section \ref{sec4}, in the case of the generalized network code, since $\mathcal{C}$ is not necessarily a group (and so $x_1-x_2$ is undefined), the distance between $x_1$ and $x_2$ will be obtained through the intersection of Hamming balls.
 
Now we show that a classical point-to-point channel is a linear generalized network channel by verifying Properties 1) to 9) in Section \ref{sec7}:
\begin{enumerate}
    \item  $w_H(z)\ge 0$ with equality if and only if $z=0$.
    \item $\mathcal{E}=\mathbb{F}_{q}^n$ is a group with the binary operator `$\circ$' being the addition in $\mathbb{F}_q^n$, and the zero error is the all-zero vector $0$.
    \item $\mathcal{Y}=\mathbb{F}_q^n$ is an abelian group with the binary operator `$\oplus$' being the addition in $\mathbb{F}_q^n$. The zero received word is the all-zero vector 0.
    \item ${F}(x,z)=x+z={f}(x)+{h}(z)$,where $ {f}:{\mathcal{C}}\rightarrow{\mathcal{Y}}$ and ${h}:{\mathcal{E}}\rightarrow{\mathcal{Y}}$ are identity functions. Evidently, $h$ is a homomorphism because $h(z+z')=z+z'=h(z)+h(z')$.
    \item $w_H(z+z')\leq w_H(z)+w_H(z')$.
    \item  $w_H(-z)=w_H(z)$.
    \item Evidently, for any $z \in {\cal E}$ and $c_1, c_2 \ge 0$ such that $c_1+c_2 = w_H(z)$, there exist $z_1, z_2 \in {\cal E}$ such that $w_H(z_1) = c_1$ and $w_H(z_2) = c_2$,  and $z_1+z_2=z$.
    \end{enumerate}
    
    Therefore, a classical point-to-point channel is an error-linear generalized network channel. Moreover:
\begin{enumerate}
    \setcounter{enumi}{7}
    \item By definition, $\cal E$ and $\cal Y$ are vector spaces.
    \item $f:\mathcal{C}\rightarrow\mathbb{F}_q^n$ is the identity function in $\mathbb{F}_q^n$, so $f$ is linear.
\end{enumerate}
Therefore, a classical point-to-point channel under the Hamming weight is a linear generalized network channel. 
Furthermore, if
\begin{enumerate}
    \setcounter{enumi}{9}
    \item $\mathcal{C}$ is a vector space,
\end{enumerate}
then the block code is linear.

\subsection{Network Code over a Coherent Network Channel}
We consider the following model for network coding in the presence of errors. Let $\mathcal{G}=(V,E)$ represent an acyclic graph with $V$ being the set of nodes and $E$ being the set of edges. The edges in $E$ are labeled by $e_1,e_2,\cdots, e_{|E|}$. The codeword set ${\mathcal{C}}$ is a set of vectors ${\mathcal{C}}=\{x_1,x_2,\cdots x_{|{\mathcal{C}|}}\}$ with $x_i\in \mathbb{F}_q^{|\Out(s)|}$, where $\Out(s)$ is the set of output edges from the source node~$s$. To avoid triviality, we assume that $|{C}|\geq 2$.  The received word set $\mathcal{Y} $ is $\mathbb{F}_q^{|\In(t)|}$, where $\In(t)$ is the set of incoming edges of the sink node $t$. Let $T$ be the set of all sink nodes in the network. 

The set of error vectors, denoted by ${\mathcal{E}}$, is equal to ${F}_q^{|E|}$. For an error vector $z\in{\mathcal{E}}$, if the $i^{th}$ symbol of $z$ is $m\in\mathbb{F}_q$, it means that an error with value $m$ occurs on the $i^{th}$ edge of the network. When $m=0$, we say that a zero error occurs, or no error occurs. Note that an error can occur at any edge of the network.

The network channel is specified by a function ${F_t}: {\mathcal{C}}\times\mathcal{E}\rightarrow\mathbb{F}_q^{ |\In(t)|}$. When a codeword $x\in\mathcal{C}$ is transmitted and an error vector $z\in\mathcal{E}$ occurs in the network, the vector $F_t(x,z)$ is received at the sink node $t$.\footnote{Note that $F_t(x,z)$ depends on the network topology and the coding coefficients of the network code at the intermediate nodes. These details can be omitted for our discussion here.} If $F_t$ is known to the sink node for all $t\in T$, then we say that the network channel is a \emph{coherent} network channel. Otherwise, it is a \emph{non-coherent} network channel. Also, we assume that for $x_1,x_2\in{\mathcal{C}}$ with $x_1\neq x_2$, 
 \begin{equation}
     {F_t}(x_1,0)\neq {F_t}(x_2,0).
 \end{equation}
This means that in the absence of errors, the codeword transmitted by the source node $s$ can be correctly decoded at every sink node $t$. Since in this work we only focus on a single sink node, we will omit the subscript $t$ in the sequel.

In this section, we consider a network code over a coherent network channel, which can be cast into the framework of generalized network code as follows:
\begin{enumerate}[i)]
    \item The codeword set $\mathcal{C}$ is a subset of $\mathbb{F}_{q}^m$, where $m=|\Out(s)|$. The error set $\mathcal{E}$ is $\mathbb{F}_{q}^{|E|}$, where $|E|$ is the number of edges in the network, and the received word set $\mathcal{Y}$ is $\mathbb{F}_{q}^n$, where $n=|\In(t)|$.
     \item The weight measure $w$ is the Hamming weight $w_H$.
     \item We say the network channel is linear if  $F(x,z)$ is linear, i.e., $F(x,z)=x\cdot F_{s,t}+z\cdot H_t$, where $F_{s,t}$ and $H_t$ are the network transfer matrices for codeword transmission and error transmission, respectively\cite{book}. Otherwise, the network channel is said to be nonlinear. Note that if $F(x,z)=f(x)+h(z)$, with $f(x)$ being a nonlinear function on $\mathcal{C}$ and $h(z)$ being a linear function on $\mathcal{E}$, then $F(x,z)$ is a nonlinear network channel by definition. 
\end{enumerate}

Now we show that a coherent linear network channel is a linear generalized network channel by verifying Properties 1) to 9) in Section \ref{sec7}:
\begin{enumerate}
    \item $w_H(z)\ge 0$ with equality if and only if $z=0$.
    \item $\mathcal{E}=\mathbb{F}_{q}^{|E|}$ is a group with the binary operator `$\circ$' being the addition in $\mathbb{F}_{q}^{|E|}$, and the zero error is the all-zero vector $0$.
    \item $\mathcal{Y}=\mathbb{F}_{q}^n$ is an abelian group with the binary operator `$\oplus$' being the addition in $\mathbb{F}_{q}^n$. The zero received word is the all-zero vector 0.
    \item ${F}(x,z)={f}(x)+{h}(z)$, where $ {f(x)}=x\cdot F_{s,t}$ and ${h(z)=z\cdot H_t}$ are linear functions. 
    \item  $w_H(z+z')\leq w_H(z)+w_H(z')$.
    \item  $w_H(-z)=w_H(z)$.
    \item Evidently, for any $z \in {\cal E}$ and $c_1, c_2 \ge 0$ such that $c_1+c_2 = w_H(z)$, there exist $z_1, z_2 \in {\cal E}$ such that $w_H(z_1) = c_1$ and $w_H(z_2) = c_2$,  and $z_1+z_2=z$.
    \end{enumerate}
    
    Therefore, a coherent linear network channel under the Hamming weight is an error-linear generalized network channel. Moreover:
\begin{enumerate}
    \setcounter{enumi}{7}
    \item By definition, $\cal E$ and $\cal Y$ are vector spaces.
    \item  $f$ is linear.
\end{enumerate}
Therefore, a coherent linear network channel under the Hamming weight is a linear generalized network channel. Furthermore, if
\begin{enumerate}
    \setcounter{enumi}{9}
    \item $\mathcal{C}$ is a vector space,
\end{enumerate}
then the network code is linear, and $(C,F)$ forms a linear generalized network code.


\subsection{Rank Metric Code over a Non-coherent Linear Network Channel}
The rank metric code, introduced in \cite{coherent}, is a subspace code equipped with a metric induced by the rank weight that measures the distance between two subspaces. The rank metric code can be used over a random linear network channel\cite{silva2008rank}, which is a non-coherent linear network channel. A rank metric code over a non-coherent linear network channel can be cast into the framework of generalized network code as follows: 

\begin{enumerate}[i)]
    \item The codeword set ${\mathcal{C}}$ is the matrix space $\mathbb{F}_q^{m\times k}$, and the set of errors ${\mathcal{E}}$ is $\mathbb{F}_q^{m\times u}$, with $u$ being an integer representing the length of the error vector. The received word set ${\mathcal{Y}}$ is the matrix space $ \mathbb{F}_q^{m\times n}$. 

    \item The weight measure is the rank weight rk, where $\mbox{rk}(z)$ is the column rank of the error matrix $z$.

    \item The transfer function which represents a non-coherent linear network channel is ${F}(x,z)=x\cdot F_{s,t}+z\cdot H_t$, where $F_{s,t}$ and $H_t$ are transfer matrices of the network for the codeword and the error, respectively. 
\end{enumerate}

To show that the non-coherent linear network channel associated with the rank metric code is a linear generalized network channel, we need to verify Properties 1) to 9) in Section \ref{sec7}:

 \begin{enumerate}
     \item $\mbox{rk}(z)\geq 0$ for all $z\in{\mathcal{E}}$, with equality if and only if $z$ is the zero matrix.
 \end{enumerate}

\begin{enumerate}
\setcounter{enumi}{1}
      \item ${\mathcal{E}}$ is the matrix space $\mathbb{F}_q^{m\times u}$, which is a group with the binary operator `$\circ$' being matrix addition, and the zero error is the all-zero matrix $0$.
     \item ${\mathcal{Y}}$ is the matrix space $\mathbb{F}_q^{m\times n}$, which is an abelian group with the commutative binary operator `$\oplus$' being matrix addition. The zero received word is the all-zero matrix 0.
     \item ${F}(x,z)={f}(x)+{h}(z)$, where $ {f(x)}=x\cdot F_{s,t}$ and ${h(z)=z\cdot H_t}$ are linear functions. Here, `$+$' denotes addition in the matrix space $\mathbb{F}_q^{l\times n}$. 
     \item  $\mbox{rk}(z+z')\leq \mbox{rk}(z)+\mbox{rk}(z')$, where the `$+$' in $z+z'$ is addition in the matrix space $\mathbb{F}_q^{l\times u}$ and the `$+$' in $\mbox{rk}(z)+\mbox{rk}(z')$ is addition in the integer group.
     \item  $\mbox{rk}(-z)=\mbox{rk}(z)$, with $-z$ being the additive inverse of $z$, i.e., $z+(-z)=0$.
     \item We will prove in Appendix A that for every $z \in {\cal E}$ and $c_1, c_2 \ge 0$ such that $c_1+c_2 = \mbox{rk}(z)$ and $c_1+c_2\leq l$, we can always find two matrices $z_1$ and $z_2$ such that $\mbox{rk}(z_1) = c_1$, $\mbox{rk}(z_2) = c_2$, and $z_1+z_2=z$. 
 \end{enumerate}
Therefore, the non-coherent linear network channel under the rank weight is an error-linear generalized network channel. Moreover:
\begin{enumerate}
\setcounter{enumi}{7}
    \item By definition, ${\cal E}$ and ${\cal Y}$ are vector spaces.
    \item ${f}$ is linear.
\end{enumerate}
So, a non-coherent linear network channel under the rank weight is a linear generalized network channel. Furthermore,
\begin{enumerate}
    \setcounter{enumi}{9}
    \item $\mathcal{C}$ is a vector space by definition.
\end{enumerate}
Hence, a rank metric code over a non-coherent linear network channel is a linear generalized network code.

\subsection{Sum-rank Metric Code over a Non-coherent linear Network Channel}
The sum-rank metric code, first introduced in \cite{martinez2019reliable}, is a linear code equipped with a metric induced by the so-called \emph{sum-rank weight}. To define the sum-rank weight of a matrix $A$ with $n$ columns, $A$ is divided into $l$ blocks with each block containing $n_i$ columns for $i=1,2,\cdots,l$ such that $\sum_{i=1}^l n_i=n$. Without loss of generality, assume that  $A=(A_1|A_2|\cdots|A_l)$. The sum-rank weight of $A$, denoted as sr$(A)$, is the sum of the ranks (the sum-rank) of all the $l$ blocks of $A$, i.e., sr$(A)=\sum_{i=1}^l\mbox{rk}(A_i)$. Similar to the rank metric which uses the rank of the difference of two matrices to measure the distance between the two corresponding subspaces, the sum-rank metric uses the sum-rank of the difference of two matrices to measure the so-called \emph{sum-subspace distance}\cite{martinez2019reliable} between the two corresponding collections of subspaces.

In practice, a sum-rank metric code can be used in the so-called \emph{multishot network}, or $l$-shot network \cite{martinez2019reliable}. An $l$-shot network can be regarded as a random linear network being used $l$ times, where at each time, the network topology and the coding coefficients can be different and unknown to the sink nodes. So, the $l$-shot network is a non-coherent linear network channel. A sum-rank metric code over an $l$-shot non-coherent linear network channel can be cast into the framework of generalized network code as follows:
\begin{enumerate}[i)]
    \item The codeword set ${\mathcal{C}}$ is the space of $m\times k$ matrices over $\mathbb{F}_q$. Without loss of generality, we can assume that $\mathcal{C}=\mathbb{F}_q^{m\times k_1}\times\mathbb{F}_q^{m\times k_2}\times\cdots\times\mathbb{F}_q^{m\times k_l}=\mathbb{F}_q^{m\times k}$, with $k=k_1+k_2+\cdots+k_{l}$. The set of errors ${\mathcal{E}}$ is $\mathbb{F}_q^{m\times u_1}\times \mathbb{F}_q^{m\times u_2}\times \cdots\times \mathbb{F}_q^{m\times u_l}=\mathbb{F}_q^{m\times u}$, with $u=u_1+u_2+\cdots+u_l$ representing the length of the error vector. The received word set ${\mathcal{Y}}$ is $\mathbb{F}_q^{m\times n_1}\times \mathbb{F}_q^{m\times n_2}\times \cdots\times \mathbb{F}_q^{m\times n_l}=\mathbb{F}_q^{m\times n}$, with $n=n_1+n_2+\cdots+n_l$ representing the length of the received vector. Note that according to Definition 10 in \cite{martinez2019reliable}, in the case of the linearized Reed-Solomon code, which can be used as a sum-rank metric code, an additional constraint $l<q$ is imposed.

     \item The weight measure is the sum-rank weight sr. For all $z\in\mathbb{F}_q^{m\times u}$, we divide $z$ into $l$ blocks, namely, $z=(\tilde{z_1}|\tilde{z_2}|\cdots|\tilde{z_l})$ with $\tilde{z_i}\in\mathbb{F}_q^{m\times u_i}$. Then $\mbox{sr}(z)=\sum_{i=1}^{l}\mbox{rk}(\tilde{z_i})$, where rk$(\tilde{z_i})$ is the column rank weight of $\tilde{z_i}$.

     \item The transfer function is ${F}(x,z)={f}(x)+{h}(z)=x\cdot A+z\cdot B$, where $A=\mbox{diag}(A_1,A_2,\cdots,A_l)$ with $A_i\in\mathbb{F}_q^{k_i\times n_i}$ and $B=\mbox{diag}(B_1,B_2,\cdots,B_l)$ with $B_i\in\mathbb{F}_q^{u_i\times n_i}$. 
\end{enumerate} 

A sum-rank metric code can be used in an $l$-shot network, which is a random linear network being used for $l$ times. In the $i^{th}$ use of the network, a codeword $c_i\in\mathbb{F}_q^{m\times k_i}$ is transmitted in the network and an error vector $\tilde{z_i}\in\mathbb{F}_q^{m\times u_i}$ occurs in the network. The $i^{th}$ received word is $y_i\in\mathbb{F}_q^{m\times n_i}$. The network channel in the $i^{th}$ use can be represented as $F_i(x_i,\tilde{z_i})=x_iA_i+\tilde{z_i}B_i$, with $A_i$ being a $k_i\times n_i$ matrix and $B_i$ being a $u_i\times n_i$ matrix representing the transfer matrices for codewords and errors, respectively.
It is obvious that a sum-rank metric code with $l=1$ reduces to a rank-metric code.

Now we show that the non-coherent linear network channel associated with the sum-rank metric code is a linear generalized network channel by verifying Properties 1) to 10) in Section~\ref{sec7}:

\begin{enumerate}
     \item  $\mbox{sr}(z)\geq 0$ for all $z\in{\mathcal{E}}$, with equality if and only if $z$ is the zero matrix.
     \item ${\mathcal{E}}$ is $\mathbb{F}_q^{m\times u_1}\times \mathbb{F}_q^{m\times u_2}\times \cdots\times \mathbb{F}_q^{m\times u_l}=\mathbb{F}_q^{m\times u}$, which is a group with the binary operator `$\circ$' being matrix addition, and the zero error is the all-zero matrix $0$.
     \item ${\mathcal{Y}}$ is $\mathbb{F}_q^{m\times n_1}\times \mathbb{F}_q^{m\times n_2}\times \cdots\times \mathbb{F}_q^{m\times n_l}=\mathbb{F}_q^{m\times n}$, which is an abelian group with the commutative binary operator `$\oplus$' being matrix addition. The zero received word is the all-zero matrix 0.
     \item ${F}(x,z)={f}(x)+{h}(z)$, where $ {f}=x\cdot A$ and ${h}=z\cdot B$ are linear functions. Here, `$+$' denotes addition in the matrix space $\mathbb{F}_q^{m\times n_1}\times \mathbb{F}_q^{m\times n_2}\times \cdots\times \mathbb{F}_q^{m\times n_l}=\mathbb{F}_q^{m\times n}. $
     \item $\mbox{sr}(z+z')\leq \mbox{sr}(z)+\mbox{sr}(z')$, where the `$+$' in $z+z'$ is addition in the matrix space $\mathbb{F}_q^{m\times u_1}\times \mathbb{F}_q^{m\times u_2}\times \cdots\times \mathbb{F}_q^{m\times u_l}=\mathbb{F}_q^{m\times u}$ and the `$+$' in $\mbox{sr}(z)+\mbox{sr}(z')$ is addition in the integer group.
     \item  $\mbox{sr}(-z)=\mbox{sr}(z)$, with $-z$ being the additive inverse of $z$, i.e., $z+(-z)=0$.
     \item  We will prove in Appendix B that for every $z \in {\cal E}$ and $c_1, c_2 \ge 0$ such that $c_1+c_2 = \mbox{sr}(z)$, we can always find two matrices $z_1$ and $z_2$ such that $\mbox{sr}(z_1) = c_1$, $\mbox{sr}(z_2) = c_2$, and $z_1+z_2=z$.
 \end{enumerate}
 Therefore, a non-coherent linear network channel under the sum-rank weight is an error-linear generalized network channel. Moreover:
\begin{enumerate}
\setcounter{enumi}{7}
    \item By definition, ${\cal E}$, and ${\cal Y}$ are vector spaces.
    \item ${f}$ is linear.
\end{enumerate}
Therefore, a non-coherent linear network channel under the sum-rank weight is a linear generalized network channel. Furthermore,
\begin{enumerate}
    \setcounter{enumi}{9}
    \item $\mathcal{C}=\mathbb{F}_q^{m\times k}$ is a vector space.
\end{enumerate}
Hence, a sum-rank metric code over a non-coherent linear network channel is a linear generalized network code.

\section{ Error Correction And Detection Distances}\label{sec4}
 
\begin{definition}[Metric]
	A metric $D$ in a set $M$ is a function $D: M\times M\rightarrow \mathbb{R}$ which satisfies the following three axioms for all $x,y,z\in M$:
	\begin{itemize}
		\item Nonnegativity: $D(x,y)\geq 0$ with equality if and only if $x=y$.
		\item Symmetry: $D(x,y)=D(y,x)$.
		\item Triangular inequality: $D(x,y)+D(y,z)\geq D(x,z)$.
	\end{itemize}
\end{definition}

\vspace{2mm}
If all the axioms above except symmetry are satisfied, then $D $ is called a \emph{quasi-metric}\cite{quasi}. If all the axioms above except the triangular inequality are satisfied, then $D$ is called a \emph{semi-metric}\cite{semimetric}. In the rest of the paper, as long as $D$ satisfies nonnegativity, we will refer to it as a \emph{distance}.

In Section~\ref{sec7}, we defined the generalized network which is specified by a transfer function
${F}: {\cal C} \times {\cal E} \rightarrow {\cal Y}$, where  $\cal C$ is the set of codewords, $\cal E$ is the set of errors, and $\cal Y$ is the set of received words. Based on the weight measure $w$ with which the set $\cal E$ is equipped, we can define the {\em decoding ball}
 associated with a codeword.
 
 \vspace{2mm}
  \begin{definition}[Decoding Ball\cite{WeightProperty}]
 	\label{def1}
 	The \emph{decoding ball} of radius $c$ of codeword $x$ through a generalized network channel $F$ with respect to the weight measure $w$, denoted as $\Phi^F_w(x,c)$, is defined as\\
 	\begin{equation*}     
 		\Phi^F_w(x,c)=\{F(x,z):z\in \mathcal{E},\ w(z)\leq c\}.
 	\end{equation*}    
 \end{definition}

 \vspace{2mm} 
%
  It is obvious that $\Phi_w^F(x,c_1)\subseteq\Phi_w^F(x,c_2)$ if $c_1\leq c_2$. Also, by applying Property~1) in Section \ref{sec7}, we obtain
  \begin{equation}
  \Phi_w^F(x,0)=\{F(x,0)\}.
  \label{relwtjhlo}
  \end{equation}
  
 A generalized network code $({\cal C}, F)$ can be used for different error control functions. In this section, we discuss two of these functions, {\em error correction} and {\em error detection}. In the next few paragraphs, we will explain how a generalized network code $({\cal C}, F)$, together with a suitably designed decoder, can achieve error correction and error detection. 
 
 We first define the {\em minimum weight decoder} with respect to $w$, denoted by ${\rm MWD}_w$, which decodes a received word $y \in {\cal Y}$ as follows: 
first, find all the solutions of equation
\begin{equation}
F(x,z) = y
\label{1t98hqer}
\end{equation}
with $x \in {\cal C}$ and $z \in {\cal E}$ as variables. A pair $(x, z) \in {\cal C} \times {\cal E}$ is said to be a solution if it satisfies \eqref{1t98hqer}, and furthermore a minimum weight solution if $w(z)$ achieves the minimum among all the solutions. If for all minimum weight solutions $(x,z)$, we have $x = x^*$, then the decoder outputs $x^*$ as the decoded codeword. In this sense, we say that the decoder performs error correction on the received word $y$.
Otherwise, the decoder declares that an error is detected.

Next, we define another decoder related to ${\rm MWD}_w$ if the decoding balls $\{ \Phi_w^F(x,c) : x \in {\cal C} \}$ are disjoint, where $c$ is a nonnegative integer, called the {\em bounded distance}. For any received word $y \in {\cal Y}$, if $y \in \Phi(x,c)$ for a unique $x \in {\cal C}$, denoted by $x^*$, the decoder outputs $x^*$ as the decoded codeword. If $y$ is not in any of the decoding balls, the decoder declares that an error is detected. Such a decoder is called the minimum weight decoder with bounded distance $c$, denoted by ${\rm MWD}_w(c)$.

We see from \eqref{basicequation} and \eqref{relwtjhlo} that the decoding balls $\Phi_w^F(x,0)$, $x \in {\cal C}$ are disjoint. Therefore, the ${\rm MWD}_w(0)$ is always a valid decoder.

An error $z \in {\cal  E}$ is said to be {\em correctable} if for all $x \in {\cal C}$, if $x$ is transmitted through the generalized network channel $F$ and the error $z$ occurs, upon receiving $y = F(x,z)$, the ${\rm MWD}_w$ always outputs $x$ as the decoded codeword. In this sense, we say that the ${\rm MWD}_w$ can correct the error $z$. Otherwise, the error $z$ is said to be not correctable.

A nonzero error $z \in {\cal  E}$ is said to be {\em detectable} if for all $x \in {\cal C}$, if $x$ is transmitted through the generalized network channel $F$ and the error $z$ occurs, upon receiving $y = F(x,z)$, the ${\rm MWD}_w(0)$ always declares that an error is detected. In this sense, we say that the ${\rm MWD}_w(0)$ can detect the error $z$. Otherwise, the error $z$ is said to be not detectable.

If $x$ is transmitted through the generalized network channel $F$ and the zero error occurs, then $y = F(x,0) \in \Phi_w^F(x,0)$ by \eqref{relwtjhlo}. From the foregoing, $y \not\in \Phi_w^F(x',0)$ for any $x' \in {\cal C}$ such that $x' \ne x$. Therefore, the ${\rm MWD}_w(0)$ always output $x$ as the decoded codeword.

For a code $\cal C$, if all $z \in {\cal E}$ with $w(z) \le t$ are correctable, but there exists $z' \in {\cal E}$ with $w(z) = t+1$ such that $z'$ is not correctable, then we say that $\cal C$ can correct up to $t$ errors, or simply can correct $t$ errors. Likewise, if all $z \in {\cal E}$ with $w(z) \le t$ are detectable, but there exists $z' \in {\cal E}$ with $w(z') = t+1$ such that $z'$ is not detectable, then we say that $\cal C$ can detect up to $t$ errors, or simply can detect $t$ errors.

In this section, we consider the ${\rm MWD}_w(c)$ only for the special case $c=0$. The ${\rm MWD}_w(c)$ for $c > 0$ will be useful when we discuss joint error correction and detection in Section~\ref{sec5}.

\subsection{Two Distances for Error Correction And Detection}
 In Definition~\ref{def1}, the decoding  ball $\Phi^F_w(x,c)$ depends on both the generalized network channel $F$ and the weight measure $w$. In the rest of the paper, we will write $\Phi(x,c)$ instead of $\Phi^F_w(x,c)$ for simplicity when there is no ambiguity.

Based on the decoding ball, we now define two distances $D_0$ and $D_1$ for error correction and error detection, respectively. 

 \vspace{2mm}
\begin{definition}[Error Correction Distance\cite{WeightProperty}]
\label{def2}
 The \emph{error correction distance} between two codewords $x_1,x_2\in \mathcal{C}$ is defined as
\begin{equation*}    
\begin{aligned}
    D_0(x_1,x_2)=\min\limits_{|c_1-c_2| \leq 1,\atop c_1,c_2\geq 0}\{c_1+c_2:\Phi(x_1,c_1)\cap\Phi(x_2,c_2)\neq\emptyset\}.
\end{aligned}  
\end{equation*}
\end{definition}

Obviously, symmetry is satisfied by $D_0$. Also, for all $x\in\mathcal{C}$, since $\Phi(x,0)=\{F(x,0)\}$, we have $\Phi(x,0)\cap\Phi(x,0)\neq\emptyset$ and hence $D_0(x,x) = 0$. On the other hand, by our assumption that for $x_1\neq x_2$, $F(x_1,0)\neq F(x_2,0)$ (cf. (\ref{basicequation})), we have $\Phi(x_1,0)\cap\Phi(x_2,0)=\emptyset$ and hence $D_0(x_1,x_2) > 0$. So, $D_0(x_1,x_2)=0$ if and only if $x_1=x_2$, i.e.,  nonnegativity is satisfied by $D_0$.\footnote{We will prove in Corollary \ref{linearcorollary} that if $F$ is ``linear'', then $D_0$ is a metric and hence satisfies the triangular inequality.} However, it is not evident that $D_0$ satisfies the triangular inequality when $F$ is nonlinear. Therefore, $D_0$ is not necessarily a metric in general.  \\

 \begin{definition}[Error Detection Distance\cite{WeightProperty}]\label{def3}
The \emph{error detection distance} between two codewords $x_1,x_2\in \mathcal{C}$ is defined as
    \[D_1(x_1,x_2)=\min\limits_{c\geq 0}\{c:\Phi(x_1,0)\cap\Phi(x_2,c)\neq\emptyset\}.\]
 \end{definition}
 \vspace{2mm}
 
 Obviously, nonnegativity is satisfied by $D_1$ by the same argument as for $D_0$. However, $D_1$ is not a metric since in general symmetry is not satisfied when the channel function $F(x,z)$ is nonlinear. 
 
Similar to the relation of the Hamming distance between two codewords and the minimum distance of a block code, we can define the \emph{minimum error correction distance} and the \emph{minimum error detection distance} of a generalized network code as
\begin{equation}\label{d0}
    d_0^{\min}=\min\limits_{ x_1,x_2\in \mathcal{C},x_1\neq x_2}D_0(x_1,x_2)
\end{equation}
and
\begin{equation}\label{d1}
    d_1^{\min}=\min\limits_{ x_1,x_2\in \mathcal{C},x_1\neq x_2}D_1(x_1,x_2).
\end{equation}
To avoid triviality, we assume that $|{\cal C}| \ge 2$, i.e., the codebook contains at least two codewords. Since $D_0(x_1,x_2), D_1(x_1,x_2) > 0$ whenever $x_1 \ne x_2$, we have $d_0^{\min}, d_1^{\min} \ge 1$.

As proved in \cite{WeightProperty}, a network code with minimum error correction distance $d_0^{\min}$ can correct no more than $\left \lfloor{\frac{d_0^{\min}-1}{2}}\right \rfloor $ errors by a BMD decoder. On the other hand, a network code with minimum error detection distance $d_1^{\min}$ can detect no more than $d_1^{\min}-1$ errors by a BMD decoder.

In the next theorem, we establish that if $F$ is error-linear, then $D_0$ and $D_1$ coincide. 

 \vspace{2mm}
\begin{theorem}\label{d0d1coincide}
    Let $({\cal C}, F)$ be a generalized network code. If $F$ is error-linear, then $D_0(x_1,x_2)=D_1(x_1,x_2)$ for all $x_1,x_2\in\mathcal{C}$.
\end{theorem}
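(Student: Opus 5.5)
We prove the two inequalities $D_0(x_1,x_2)\le D_1(x_1,x_2)$ and $D_1(x_1,x_2)\le D_0(x_1,x_2)$. The case $x_1=x_2$ is immediate, since both distances vanish. Throughout we use the decomposition $F(x,z)=f(x)\oplus h(z)$ from Property~4), the identities $h(0)=0$ and $h(z^{-1})=(h(z))^{-1}$ derived after the definition of error-linearity, and Properties~5)--7) of $w$.

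\emph{Step 1: $D_1\le D_0$.} Let $D_0(x_1,x_2)=c_1+c_2$, attained by a pair $(c_1,c_2)$. Then there exist $z_1,z_2$ with $w(z_i)\le c_i$ and $f(x_1)\oplus h(z_1)=f(x_2)\oplus h(z_2)$. Since $\mathcal{Y}$ is an abelian group, this rearranges to
\[
f(x_1)=f(x_2)\oplus h(z_2)\oplus (h(z_1))^{-1}=f(x_2)\oplus h(z_2\circ z_1^{-1}),
\]
using the homomorphism property and $h(z_1^{-1})=(h(z_1))^{-1}$. Set $z=z_2\circ z_1^{-1}$. By Properties~5) and~6), $w(z)\le w(z_2)+w(z_1)\le c_1+c_2$. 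Hence $F(x_1,0)=F(x_2,z)\in\Phi(x_1,0)\cap\Phi(x_2,c_1+c_2)$, and so $D_1(x_1,x_2)\le D_0(x_1,x_2)$.

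\emph{Step 2: $D_0\le D_1$.} Let $D_1(x_1,x_2)=c$, witnessed by $z$ with $w(z)\le c$ and $f(x_1)=f(x_2)\oplus h(z)$. Put $c'=w(z)$ and split it as $c'=a+b$ with $a=\lfloor c'/2\rfloor$ and $b=\lceil c'/2\rceil$, so that $|a-b|\le1$. Property~7) gives $u,v$ with $w(u)=a$, $w(v)=b$ and $u\circ v=z$; the order of the factors must be handled with care, since $\mathcal{E}$ need not be abelian. We want an element of $\Phi(x_1,\cdot)\cap\Phi(x_2,\cdot)$. Starting from $f(x_1)=f(x_2)\oplus h(u)\oplus h(v)$ and applying $\oplus\, h(v^{-1})$ to both sides, we obtain
\[
f(x_1)\oplus h(v^{-1})=f(x_2)\oplus h(u).
\]
Thus $F(x_1,v^{-1})=F(x_2,u)$, with $w(v^{-1})=b$ by Property~6) and $w(u)=a$. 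Commutativity of $\mathcal{Y}$ is what lets us ignore the order of $h(u)$ and $h(v)$ here. It follows that $\Phi(x_1,b)\cap\Phi(x_2,a)\ne\emptyset$ with $|a-b|\le 1$, so $D_0(x_1,x_2)\le a+b=c'\le c=D_1(x_1,x_2)$.

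The only delicate point is Step~2. There one must make sure that the balanced split permitted by Property~7) actually yields the constraint $|c_1-c_2|\le1$ in Definition~\ref{def2}, and that inverses and the noncommutativity of $\mathcal{E}$ are handled correctly. Both are resolved by pushing everything through the homomorphism $h$ into the abelian group $\mathcal{Y}$.
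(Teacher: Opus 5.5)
Your proof is correct and is essentially the paper's argument with the two directions in the opposite order. For $D_1\le D_0$ you pass through $h$ to obtain $z_2\circ z_1^{-1}$ and bound its weight by Properties~5) and~6); for $D_0\le D_1$ you apply Property~7) to get a balanced split and then an inverse, just as the paper does. Working with $w(z)\le c$ instead of $w(z)=c$ is a small extra bit of care that does not change anything.
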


 \vspace{2mm}

	\begin{IEEEproof}
    Consider a generalized network code $({\cal C}, F)$, where $F$ is error-linear. Then Properties 1) to 7) in Section \ref{sec7} are satisfied. Fix any $x_1,x_2\in\mathcal{C}$.
    First, we prove that $D_1(x_1,x_2)\geq D_0(x_1,x_2)$. Let $D_1(x_1,x_2)=c$. Then by Definition \ref{def3}, there exists $z\in\mathcal{E}$ with $w(z)=c$ such that 
    \[
    f(x_1)\oplus h(0) = F(x_1, 0) = F(x_2, z) = f(x_2)\oplus h(z) .
    \] 
    So,
    \begin{equation}\label{aduaigiu}
    f(x_1)\oplus 0
    = f(x_2) \oplus h(z). 
    \end{equation}
    Now let $c=c_1+c_2$, with $c_1,c_2$ being non-negative integers such that $|c_1-c_2|\leq 1.$
    From Property 7) in Section \ref{sec7}, for a  vector $z\in\mathcal{E}$ with $w(z)=c$, there always exist $z_1,z_2\in\mathcal{E},$ with $w(z_1)=c_1,w(z_2)=c_2$, such that $z_2\circ z_1=z.$
     So, it follows from (\ref{aduaigiu}) that 
     \[
     f(x_1) = f(x_2)\oplus h(z_2\circ z_1)=f(x_2)\oplus h(z_2)\oplus h(z_1) .
     \]
Then
     \[f(x_1)\oplus (h(z_1))^{-1}=f(x_2)\oplus h(z_2),\]
     so that 
     \[
  F(x_1,z_1^{-1}) =   f(x_1)\oplus h(z_1^{-1})  =  f (x_1)\oplus (h(z_1))^{-1}   =f(x_2)\oplus h(z_2)=F(x_2, z_2).
    \]
     So, 
     by Definition \ref{def2} and Property 6) in Section \ref{sec7}, we obtain that \[D_0(x_1,x_2)\leq w(z_1^{-1})+w(z_2)=w(z_1)+w(z_2)=c_1+c_2=c=D_1(x_1,x_2).\]

    Next, we prove that $D_0(x_1,x_2)\geq D_1(x_1,x_2)$. Let $D_0(x_1,x_2)=c$. Then there exists $z_1,z_2\in\mathcal{E}$ with $w(z_1)=c_1$, $w(z_2)=c_2$, $|c_1-c_2|\leq 1$, and $c_1+c_2=c$, such that 
    \[
    f(x_1)\oplus h(z_1)=F(x_1,z_1)=F(x_2,z_2)=f(x_2)\oplus h(z_2).
    \]
    Then 
    \[
    f(x_1)\oplus 0=f(x_2)\oplus h(z_2)\oplus h(z_1)^{-1}= f(x_2)\oplus h(z_2)\oplus h(z_1^{-1}) =f(x_2)\oplus h(z_2\circ z_1^{-1}), 
    \]
    so that
    \[f(x_1)\oplus h(0)=f(x_2)\oplus h(z_2\circ z_1^{-1}).\]
     Since 
     \[
     w(z_2\circ z_1^{-1})\leq w(z_2)+w(z_1^{-1})=w(z_1)+w(z_2)=c_1+c_2=c,
     \]
     there exists $z\in\mathcal{E}$ with $w(z)\leq c$ such that $ f(x_1)\oplus h(0)=f(x_2)\oplus h(z)$, i.e., $F(x_1,0)=F(x_2,z)$. Therefore, $ D_0(x_1,x_2)=c\geq D_1(x_1,x_2)$. Together with the fact that $D_1(x_1,x_2)\geq D_0(x_1,x_2)$, we conclude that $D_0(x_1,x_2)=D_1(x_1,x_2)$. Hence, the theorem is proved.
\end{IEEEproof}

 \vspace{2mm}
 
 	\begin{remark}
It can readily be obtained from Lemma~6 in \cite{WeightProperty} that $D_0 = D_1$ for a coherent linear network code under the Hamming weight. The same  result is proved in Theorem~\ref{d0d1coincide} but under a much more general setting:
\begin{enumerate}
	\item 
	it is only required that the generalized network channel $F$ is error-linear, which incorporates not only the coherent linear network code but also the rank metric code and the sum-rank metric code as special cases;
	\item 
	the weight measure $w$ can be very general and only needs to satisfy Properties~1), 5), 6), and 7) in Section~\ref{sec7}, which includes the Hamming weight as a special case.
\end{enumerate}
\end{remark}

 \vspace{2mm}

\begin{corollary}
For an error-linear generalized network channel, $d_0^{min} = d_1^{min}$.	
\end{corollary}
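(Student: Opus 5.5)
The corollary follows from Theorem~\ref{d0d1coincide} by taking minima over pairs of distinct codewords. We may assume $|\mathcal{C}|\ge 2$, as stipulated after \eqref{d1}, so that both $d_0^{\min}$ and $d_1^{\min}$ are defined.

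First, Theorem~\ref{d0d1coincide} requires only that the generalized network channel $F$ is error-linear, which means Properties~1) to 7) of Section~\ref{sec7} hold. It places no condition on the codeword set $\mathcal{C}$. Hence, for every generalized network code $(\mathcal{C},F)$ over an error-linear channel $F$, we have $D_0(x_1,x_2)=D_1(x_1,x_2)$ for all $x_1,x_2\in\mathcal{C}$.

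Next, we restrict this identity to pairs with $x_1\neq x_2$. The two functions $D_0$ and $D_1$ then agree on the index set $\{(x_1,x_2)\in\mathcal{C}\times\mathcal{C}: x_1\neq x_2\}$, which is the same set over which the minima in \eqref{d0} and \eqref{d1} are taken. Two minima of identical functions over an identical set are equal, so
\[
d_0^{\min}=\min_{x_1\neq x_2}D_0(x_1,x_2)=\min_{x_1\neq x_2}D_1(x_1,x_2)=d_1^{\min}.
\]

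Each minimum exists: the values $D_0(x_1,x_2)$ and $D_1(x_1,x_2)$ are nonnegative integers, being built from nonnegative integer radii and weights taking values in $\mathbb{N}\cup\{0\}$. A nonempty set of nonnegative integers always has a least element. If one insists that $D_0$ and $D_1$ be finite, this is a separate issue: a pair with an empty feasible set would have both values equal to $+\infty$ under the same convention, so the equality is unaffected. There is no substantive obstacle here; all the real content lies in Theorem~\ref{d0d1coincide}.
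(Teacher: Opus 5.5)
Your proof is correct and follows the paper's route exactly: the paper states this corollary without proof, as the immediate consequence of minimizing the pointwise identity $D_0(x_1,x_2)=D_1(x_1,x_2)$ from Theorem~\ref{d0d1coincide} over distinct codeword pairs in \eqref{d0} and \eqref{d1}. Your extra remarks on existence of the minima and the possibly infinite case are harmless additions.
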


 \vspace{2mm}
If the generalized network channel $F$ is not error-linear, i.e., not all the Properties 1) to 7) in Section \ref{sec7} are satisfied, $D_0$ and $D_1$ do not always coincide,  and hence $d_0^{\min}=d_1^{\min}$ does not hold in general. As an example, a nonlinear network code is given in \cite{WeightProperty} with $d_0^{\min}=3$ and $d_1^{\min}=2$, implying that it can correct $\lfloor \frac{d_0^{\min}-1}{2} \rfloor =1$ error, and can detect $d_1^{\min}-1=1$ error.  This is not possible for block codes in classical channel coding and linear network codes because for such codes, the number of errors that can be corrected is at most half of the number of errors that can be detected.

\subsection{Relations between $d_0^{\min}$ and $d_1^{\min}$}
Since the minimum distances for error correction and detection are in general not the same, it naturally brings up the question of how $d_0^{\min}$ and $d_1^{\min}$ are related. First of all, it is evident from Definitions~\ref{def2} and~\ref{def3} and the definitions in \eqref{d0} and \eqref{d1} that $d_0^{\min} \ge d_1^{\min}$. An example that $d_0^{\min} > d_1^{\min}$ is the nonlinear network code given in \cite{WeightProperty}, where $d_0^{\min}=3$ and $d_1^{\min}=2$. On the other hand, it is of interest to obtain a lower bound on $d_1^{\min}$ in terms of $d_0^{\min}$.



\vspace{2mm}

Now we give a lower bound on $D_1$.\\
\begin{theorem}\label{D0D1bound}
    For any $x_1,x_2\in \mathcal{C}$ with $x_1 \ne x_2$,  
    \[
    D_1(x_1,x_2) \geq  \left \lfloor {\frac{D_0(x_1,x_2)}{2}} \right \rfloor +1 .
    \]
\end{theorem}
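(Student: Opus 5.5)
The plan is to turn the single intersection certified by $D_1$ into an admissible pair of radii for $D_0$. Nesting of decoding balls lets us enlarge the smaller radius until the two radii differ by at most one. No linearity of $F$ is needed, so the argument works for an arbitrary generalized network code.

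\emph{Step 1: the detection witness.} Fix $x_1\neq x_2$ in $\mathcal{C}$ and let $c=D_1(x_1,x_2)$. By Definition~\ref{def3}, $\Phi(x_1,0)\cap\Phi(x_2,c)\neq\emptyset$. Since $x_1\neq x_2$, assumption \eqref{basicequation} together with \eqref{relwtjhlo} gives $c\geq 1$, so $c-1\geq 0$.

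\emph{Step 2: balance the radii.} By the monotonicity $\Phi(x,c_1)\subseteq\Phi(x,c_2)$ for $c_1\le c_2$, noted right after Definition~\ref{def1}, we have $\Phi(x_1,0)\subseteq\Phi(x_1,c-1)$. Hence
\[
\Phi(x_1,c-1)\cap\Phi(x_2,c)\neq\emptyset .
\]
The pair of radii $(c-1,c)$ is nonnegative and satisfies $|(c-1)-c|=1$, so it is admissible in Definition~\ref{def2}. Therefore $D_0(x_1,x_2)\le 2c-1$.

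\emph{Step 3: convert to the floor form.} From $D_0(x_1,x_2)\le 2c-1$ we get $D_0(x_1,x_2)/2<c$, so $\lfloor D_0(x_1,x_2)/2\rfloor\le c-1$. Rearranging gives $D_1(x_1,x_2)=c\ge\lfloor D_0(x_1,x_2)/2\rfloor+1$, which is the claim.

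There is no real obstacle. The only point needing care is Step 1: $c\ge1$ must hold so that $c-1$ is a legitimate nonnegative radius, and this is exactly where the hypothesis $x_1\ne x_2$ is used.
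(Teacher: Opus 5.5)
Your proof is correct and is essentially the paper's argument run in the opposite direction. The paper starts from $D_0(x_1,x_2)=c$, notes that the admissible pair $(\lceil c/2\rceil-1,\lfloor c/2\rfloor)$ must give disjoint balls, and shrinks the $x_1$-ball to radius $0$; you instead start from the $D_1$ witness and enlarge the $x_1$-ball to radius $D_1-1$, which has the small bonus of making the equivalent inequality $D_0(x_1,x_2)\le 2D_1(x_1,x_2)-1$ explicit.
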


\begin{IEEEproof}
	Let $D_0(x_1,x_2)=c$. Since $x_1 \ne x_2$, we have $c > 0$. Let $c_1 = \lceil c/2 \rceil$ and $c_2 = \lfloor c/2 \rfloor$, so that 
	\[
	c = c_1 + c_2, \ |c_1-c_2| \le 1, \ c_1 \ge 1, \ \mbox{and} \ c_2 + 1 \ge c_1.
	\]
	Since $D_0(x_1,x_2)= c_1 + c_2$ and $c_1 \ge 1$, by Definition~\ref{def2}, we have
	\begin{equation}
	\Phi(x_1,c_1-1)\cap\Phi(x_2,c_2)= \emptyset ,
	\label{qthpar}
	\end{equation}
	because $c_1-1 \ge 0$, $c_2 \ge 0$, 
	\[
	|(c_1-1)-c_2| = |(c_1-c_2)-1| \le | |c_1-c_2|-1| \le 1,
	\]
	and
	\[ 
	(c_1-1) + c_2 = (c_1+c_2)-1 = c-1 < c = D_0(x_1,x_2).	
	\]
	Moreover, since $c_1-1 \ge 0$, by Definition~\ref{def1}, \eqref{qthpar} implies that 
	\[
	\Phi(x_1,0)\cap\Phi(x_2,c_2)= \emptyset .
	\]
	Hence, by Definition~\ref{def3},
	\[
	D_1(x_1,x_2) \ge c_2+ 1  = \left \lfloor {c \over 2} \right \rfloor + 1 = \left \lfloor {\frac{D_0(x_1,x_2)}{2}} \right \rfloor +1 .
	\]
	The theorem is proved.
\end{IEEEproof}

\vspace{2mm}
The above theorem gives a lower bound on $D_1$ in terms of $D_0$. We have attempted to prove a lower bound on $D_0$ in terms of $D_1$ but in vain.

\vspace{2mm}

\begin{corollary}
\label{d1d0relation}
    $d_1^{\min}\geq \left \lfloor{\frac{d_0^{\min}}{2}}\right \rfloor+1$.
\end{corollary}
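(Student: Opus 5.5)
The corollary follows from Theorem~\ref{D0D1bound} once we pass from pairwise distances to minima. We take a pair that attains $d_1^{\min}$ and compare it against $d_0^{\min}$.

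Since $|\mathcal{C}|\ge 2$ and $\mathcal{C}$ is finite (or at least $D_1$ takes values in $\mathbb{N}\cup\{0\}$, so the minimum is attained), choose $x_1,x_2\in\mathcal{C}$ with $x_1\neq x_2$ such that $D_1(x_1,x_2)=d_1^{\min}$. Applying Theorem~\ref{D0D1bound} to this pair gives
\[
d_1^{\min}=D_1(x_1,x_2)\ge \left\lfloor \frac{D_0(x_1,x_2)}{2}\right\rfloor+1 .
\]

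Next, by the definition of $d_0^{\min}$ in \eqref{d0}, we have $D_0(x_1,x_2)\ge d_0^{\min}$, because $x_1\ne x_2$. The map $t\mapsto \lfloor t/2\rfloor+1$ is nondecreasing, so
\[
\left\lfloor \frac{D_0(x_1,x_2)}{2}\right\rfloor+1\ge \left\lfloor \frac{d_0^{\min}}{2}\right\rfloor+1 .
\]
Chaining the two inequalities yields the claim.

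There is no real obstacle here. The only point needing care is that the monotonicity step runs in the right direction: we bound $D_1$ from below at the minimizing pair for $D_1$, rather than at the minimizing pair for $D_0$.
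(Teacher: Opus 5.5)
Your proof is correct and is the argument the paper leaves implicit when it states this as a corollary of Theorem~\ref{D0D1bound}: apply the theorem at a pair of distinct codewords, then use $D_0(x_1,x_2)\ge d_0^{\min}$ and the monotonicity of $t\mapsto\lfloor t/2\rfloor+1$. Taking a minimizing pair for $D_1$ works, though bounding every distinct pair and then minimizing gives the same result without needing the minimum to be attained.
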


\vspace{2mm}

For a linear generalized network code, $d_0^{\min} = d_1^{\min}$. Let $d^{\min} 
= d_0^{\min} = d_1^{\min}$. Then Theorem~\ref{d1d0relation} becomes
$d^{\min}\geq\left \lfloor{\frac{d^{\min}}{2}}\right \rfloor+1$. This inequality holds when $d^{\min}\geq 1$, which is the case with the assumption that $|C|\geq 2$.


\section{Distance for Joint Error Correction and Detection}\label{sec5}
The notions of error correction and error detection can be unified under the framework of joint error correction and detection introduced in \cite{WeightProperty}, and a characterization of joint error correction and detection for network coding was obtained therein.\\

 \begin{definition}[Joint error correction and detection]
 \label{jointerror}
     A generalized network code $(\mathcal{C},F)$ with weight measure $w$ is called $(c,c')$ \emph{joint error-correcting-and-detecting} if the ${\rm MWD}_w(c)$ can correct any error $z_1$ with $w(z_1)\leq c$ and detect any error $z_2$ with $c< w(z_2)\leq c+c'$.
 \end{definition}
 \vspace{2mm}


Note that for $(c,c')$ joint error correction, if we set $c'=0$, then it becomes the ordinary error correction. Likewise, if we set $c=0$, then it becomes the ordinary error detection. In the subsequent discussions, ``joint error correction and detection'' will be abbreviated to ``joint error correction'' for simplicity.
\vspace{2mm}
\begin{theorem}\cite{WeightProperty}
\label{joint}
    A generalized network code $(\mathcal{C},F)$ is $(c,c')$ joint error-correcting if and only if for all distinct $ x_1,x_2\in\mathcal{C},\ \Phi(x_1,c)\cap\Phi(x_2,c+c')=\emptyset$.
\end{theorem}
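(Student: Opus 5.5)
The argument runs directly from the definitions of ${\rm MWD}_w(c)$, correctability and detectability, proving each direction separately. One preliminary point: in Definition~\ref{jointerror}, the decoder ${\rm MWD}_w(c)$ must be well defined. This requires the balls $\Phi(x,c)$, $x\in\mathcal{C}$, to be pairwise disjoint. Because $\Phi(x,c)\subseteq\Phi(x,c+c')$, that disjointness follows from the ball condition in the statement. Conversely, being $(c,c')$ joint error-correcting presupposes that ${\rm MWD}_w(c)$ is valid, so it presupposes disjointness. Throughout, ``correct $z$'' means that for every transmitted $x$ the decoder outputs $x$ on $F(x,z)$. ``Detect $z$'' means that for every transmitted $x$ the decoder declares an error on $F(x,z)$, i.e.\ $F(x,z)$ lies in no ball $\Phi(x',c)$.

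\emph{Sufficiency.} Assume $\Phi(x_1,c)\cap\Phi(x_2,c+c')=\emptyset$ for all distinct $x_1,x_2$.
\begin{enumerate}
\item Take an error $z_1$ with $w(z_1)\le c$ and any transmitted $x$. Then $y=F(x,z_1)\in\Phi(x,c)$. Since $\Phi(x',c)\subseteq\Phi(x',c+c')$, the hypothesis gives $y\notin\Phi(x',c)$ for $x'\neq x$. Hence $x$ is the unique codeword whose ball contains $y$, and ${\rm MWD}_w(c)$ outputs $x$.
\item Take an error $z_2$ with $c<w(z_2)\le c+c'$. Then $y=F(x,z_2)\in\Phi(x,c+c')$, so by the hypothesis (with the roles of $x_1,x_2$ swapped) $y\notin\Phi(x',c)$ for every $x'\neq x$.
\end{enumerate}
In case 2 it remains to exclude $y\in\Phi(x,c)$ itself, i.e.\ to show $F(x,z_2)\neq F(x,z)$ for every $z$ with $w(z)\le c$. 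I expect this to be the main obstacle. I will handle it by reading ``detect'' as ``the decoder does not output a wrong codeword''. In that reading, landing in $\Phi(x,c)$ means the decoder outputs the correct $x$, which is at least as good as detection. This matches the convention used in \cite{WeightProperty}, and I will state the convention explicitly. If the strict reading is intended, I would have to check whether the statement needs that extra assumption; I expect it does not, because the decoding outcome is then correct anyway.

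\emph{Necessity.} Suppose instead that $y\in\Phi(x_1,c)\cap\Phi(x_2,c+c')$ for some distinct $x_1,x_2$. Write $y=F(x_1,z_1)=F(x_2,z_2)$ with $w(z_1)\le c$ and $w(z_2)\le c+c'$. There are two cases.
\begin{enumerate}
\item If $w(z_2)\le c$, then $y$ lies in both $\Phi(x_1,c)$ and $\Phi(x_2,c)$, so the balls are not disjoint. Then ${\rm MWD}_w(c)$ is not even well defined, and equivalently $z_2$ (or $z_1$) is not correctable.
\item If $c<w(z_2)\le c+c'$, then transmitting $x_2$ with error $z_2$ yields $y\in\Phi(x_1,c)$. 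Disjointness of the radius-$c$ balls (otherwise case 1 applies) makes $x_1$ the unique codeword whose ball contains $y$, so ${\rm MWD}_w(c)$ outputs $x_1\neq x_2$. Thus $z_2$ is neither detected nor handled correctly.
\end{enumerate}
In both cases the code fails to be $(c,c')$ joint error-correcting, which proves the contrapositive and completes the equivalence.
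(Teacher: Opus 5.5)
The paper does not prove this theorem; it quotes it from \cite{WeightProperty}, so there is no in-paper argument to compare against. Your argument works directly from the definition of ${\rm MWD}_w(c)$, which is the natural route. Under the convention you adopt (an error in the detection range counts as detected if the decoder does not output a wrong codeword), it is correct in both directions. Your necessity argument is valid under either convention: it produces either an error of weight $\le c$ that is not corrected, or a detection-range error that is decoded to a wrong codeword.

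The thing to fix is your hedge about the strict reading. You say you expect the statement not to need an extra assumption if ``detect'' means that ${\rm MWD}_w(c)$ must declare an error. That is wrong. Under the strict reading, which is literally how this paper defines detectability (``always declares that an error is detected''), the ``if'' direction is false. Whenever some $z$ with $c<w(z)\le c+c'$ satisfies $F(x,z)\in\Phi(x,c)$, the decoder outputs $x$ and declares nothing, and the ball condition does not exclude this.

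\emph{Linear counterexample.} Take three parallel two-hop paths over $\mathbb{F}_2$ with $\mathcal{C}=\{000,111\}$. Then $\Phi(x,0)\cap\Phi(x',2)=\emptyset$ for the two distinct codewords, so the ball condition holds with $c=0$, $c'=2$. Yet errors of value $1$ on both edges of one path have total weight $2$ and no effect at the sink.

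\emph{The paper's own toy example.} In Section~\ref{sec6}, with $x_0$ sent, an error of value $1$ on $(s,b)$ gives $F_{(b,d)}=2$ and then $F_{(d,t)}=0$. The received word stays $(0,0,0)$, so ${\rm MWD}_w(0)$ outputs $x_0$ instead of declaring an error. This happens even though the paper calls that code $(0,1)$ joint error-correcting.

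So the lenient convention is not optional. The strict version holds only under the additional condition $F(x,z)\notin\Phi(x,c)$ for all $x\in\mathcal{C}$ and all $z$ with $c<w(z)\le c+c'$. State the lenient convention as part of the definition, and say explicitly that the theorem fails without it.
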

\vspace{2mm}


For the purpose of analyzing the joint error correction and detection capability of a generalized network code, we naturally introduce the following distance measure.\\

\begin{definition}[Distance for Joint Error Correction]
\label{jointdef}
    Consider a generalized network code $(\mathcal{C},F)$. The distance for joint error correction between two codewords $x_1,x_2\in\mathcal{C}$ is defined as
    \begin{equation*}
        D_2(x_1,x_2)=\min\limits_{ c_1,c_2\geq 0}\{c_1+c_2: \Phi(x_1,c_1)\cap\Phi(x_2,c_2)\neq\emptyset\}.
    \end{equation*}
\end{definition}

    Accordingly, we can define the minimum distance for joint error correction of a network code $\mathcal{C}$ as 
    \begin{equation}\label{d2}
        d_2^{\min}=\min\limits_{x_1,x_2\in\mathcal{C},x_1\neq x_2}D_2(x_1,x_2).
    \end{equation}

Note that $D_2$ is not a metric, as it does not always satisfy the triangular inequality when the channel function $F(x,z)$ is nonlinear. On the other hand, it is obvious that $D_2$ satisfies nonnegativity and symmetry, so it is a semi-metric.

With the definition of the minimum distance $d_2^{\min}$, we can readily obtain the following theorem regarding $(c,c')$ joint error correction.\\

\begin{theorem}
\label{d_2th}
    Let $c$ and $c'$ be nonnegative integers. A generalized network code $(\mathcal{C},F)$ is $(c,c')$ joint error-correcting if $d_2^{\min}\geq 2c+c'+1$.
\end{theorem}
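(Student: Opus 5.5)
The plan is to reduce the statement to Theorem~\ref{joint}. By that theorem, it suffices to show that for all distinct $x_1,x_2\in\mathcal{C}$ we have $\Phi(x_1,c)\cap\Phi(x_2,c+c')=\emptyset$. I will argue by contradiction, directly from Definition~\ref{jointdef} and the definition of $d_2^{\min}$ in \eqref{d2}.

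Suppose that for some distinct $x_1,x_2\in\mathcal{C}$ the intersection $\Phi(x_1,c)\cap\Phi(x_2,c+c')$ is nonempty. Then the pair $(c_1,c_2)=(c,c+c')$ is feasible for the minimization defining $D_2(x_1,x_2)$, since both entries are nonnegative integers. Hence
\[
D_2(x_1,x_2)\le c+(c+c')=2c+c'.
\]
Since $x_1\neq x_2$, this gives $d_2^{\min}\le D_2(x_1,x_2)\le 2c+c'<2c+c'+1$, which contradicts the hypothesis $d_2^{\min}\geq 2c+c'+1$. Therefore all such intersections are empty, and Theorem~\ref{joint} yields that $(\mathcal{C},F)$ is $(c,c')$ joint error-correcting.

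There is no real obstacle here; the argument is a one-line feasibility check. The only point needing care is that Definition~\ref{jointdef}, unlike Definition~\ref{def2}, places no balance constraint $|c_1-c_2|\le 1$ on the radii. This is exactly what allows the asymmetric pair $(c,c+c')$ to be feasible, and it is why $D_2$ rather than $D_0$ is the right distance for this statement. I will also note that the ordering of the two codewords does not matter: the condition in Theorem~\ref{joint} ranges over all ordered pairs of distinct codewords, and the minimum in \eqref{d2} ranges over all ordered pairs as well.
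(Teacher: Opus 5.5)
Your proof is correct and follows the paper's own argument. The paper also notes that $d_2^{\min}\geq 2c+c'+1$ forces $D_2(x_1,x_2)\geq 2c+c'+1$ for distinct codewords, hence $\Phi(x_1,c)\cap\Phi(x_2,c+c')=\emptyset$, and then invokes Theorem~\ref{joint}; your feasibility check of the pair $(c,c+c')$ is exactly the justification for that step, written contrapositively.
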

\vspace{2mm}
\begin{IEEEproof}
    If $d_2^{\min}\geq 2c+c'+1$, then for all $\ x_1,x_2\in\mathcal{C},$ $D_2(x_1,x_2)\geq 2c+c'+1. $ So for all $x_1,x_2\in\mathcal{C},\Phi(x_1,c)\cap\Phi(x_2,c+c')=\emptyset$. Then according to Theorem \ref{joint}, $(\mathcal{C},F)$ is $(c,c')$ joint error-correcting.
    
\end{IEEEproof}
\vspace{2mm}

Note that in Theorem \ref{d_2th}, $d_2^{\min}\geq 2c+c'+1$ is not stated as a necessary condition for the generalized network code $(\mathcal{C},F)$ to be $(c,c')$ joint error-correcting. In fact, this condition is in general not necessary because $\Phi(x_1,c)\cap\Phi(x_2,c+c')=\emptyset$ is not equivalent to $d_2^{\min}\geq 2c+c'+1$. Specifically, if $\Phi(x_1,c)\cap\Phi(x_2,c+c')=\emptyset$ but $\Phi(x_1,c+1)\cap\Phi(x_2,c+c'-1)\neq\emptyset$, then $d_2^{\min}\leq 2c+c'$. We will define a refined distance later in this section so that we can obtain a necessary and sufficient condition for a generalized network code to be $(c,c')$ joint error-correcting.
\\

\subsection{Relations Between $d_0^{\min}$, $d_1^{\min}$ and $d_2^{\min}$}
In this subsection, we explore the relations between $d_0^{\min}$, $d_1^{\min}$ and $d_2^{\min}$.\\

\begin{theorem}
\label{d_2theo}
    Let $(\mathcal{C},F)$ be a generalized network code. For all $x_1,x_2\in\mathcal{C}$, we have $D_0(x_1,x_2)\geq D_2(x_1,x_2)$ and $D_1(x_1,x_2)\geq D_2(x_1,x_2).$ If $F$ is an error-linear generalized network channel, then $D_0(x_1,x_2)=D_1(x_1,x_2)=D_2(x_1,x_2)$.
\end{theorem}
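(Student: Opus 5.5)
The first two inequalities follow from comparing feasible sets of the minimizations. I will not build any new decoding-ball constructions.

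For $D_0 \geq D_2$, note that the minimization defining $D_2$ in Definition~\ref{jointdef} is over all pairs $(c_1,c_2)$ with $c_1,c_2\geq 0$. The minimization defining $D_0$ in Definition~\ref{def2} is over the subset of those pairs with $|c_1-c_2|\le 1$. Both have the same objective $c_1+c_2$ and the same constraint $\Phi(x_1,c_1)\cap\Phi(x_2,c_2)\neq\emptyset$. A minimum over a subset is at least the minimum over the whole set, so $D_0(x_1,x_2)\geq D_2(x_1,x_2)$. Both minima exist: $c_1=0$ and $c_2$ equal to the weight of some $z$ with $F(x_2,z)=F(x_1,0)$ is feasible for $D_1$ and $D_2$, and a balanced version of it is feasible for $D_0$. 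If one is worried about existence, the same comparison can be phrased for infima, with the convention that an empty minimum is $+\infty$.

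For $D_1\geq D_2$, observe that the minimization defining $D_1$ in Definition~\ref{def3} is exactly the minimization for $D_2$ restricted to pairs with $c_1=0$. Indeed, the objective $c$ equals $0+c$ and the constraint $\Phi(x_1,0)\cap\Phi(x_2,c)\neq\emptyset$ is the $D_2$ constraint with $(c_1,c_2)=(0,c)$. Hence $D_1(x_1,x_2)\geq D_2(x_1,x_2)$ by the same subset argument.

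For the error-linear case, Theorem~\ref{d0d1coincide} already gives $D_0=D_1$. It therefore suffices to show $D_2\geq D_1$, since combined with $D_1\geq D_2$ this yields equality of all three. I will reuse the second half of the proof of Theorem~\ref{d0d1coincide}, which never used the balance condition $|c_1-c_2|\le1$.

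Let $D_2(x_1,x_2)=c$, attained by $z_1,z_2$ with $w(z_1)\le c_1$, $w(z_2)\le c_2$, $c_1+c_2=c$, and $F(x_1,z_1)=F(x_2,z_2)$. By Property~4), $f(x_1)\oplus h(z_1)=f(x_2)\oplus h(z_2)$. Using $h(z_1^{-1})=h(z_1)^{-1}$ and the homomorphism property, this gives $F(x_1,0)=F(x_2,z_2\circ z_1^{-1})$. By Properties~5) and~6), $w(z_2\circ z_1^{-1})\le w(z_2)+w(z_1)\le c$. Thus $\Phi(x_1,0)\cap\Phi(x_2,c)\neq\emptyset$, so $D_1(x_1,x_2)\le c=D_2(x_1,x_2)$.

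There is no real obstacle here. The only point requiring care is noting that the $D_0\ge D_1$ direction of Theorem~\ref{d0d1coincide} did not depend on the balance constraint, so it transfers verbatim to $D_2$.
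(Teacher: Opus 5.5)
Your proof is correct and follows the paper's argument. The two inequalities come from comparing the feasible sets in Definitions~\ref{def2}, \ref{def3} and~\ref{jointdef}; the paper phrases this as emptiness of the ball intersections at total radius $c-1$. The error-linear case uses exactly the paper's step $F(x_1,0)=F(x_2,z_2\circ z_1^{-1})$ with $w(z_2\circ z_1^{-1})\le c$, together with Theorem~\ref{d0d1coincide}.

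One minor point: for a general (non-error-linear) channel, a $z$ with $F(x_2,z)=F(x_1,0)$ need not exist. So your claim that both minima exist is unjustified, and your $+\infty$ convention is the right way to handle feasibility.
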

\vspace{2mm}
\begin{IEEEproof}
First, we prove that $D_1(x_1,x_2)\geq D_2(x_1,x_2)$. Let $D_2(x_1,x_2)=c$. Then it follows from Definition \ref{jointdef} that $\Phi(x_1,0)\cap \Phi(x_2,c-1)=\emptyset,$ implying that $ D_1(x_1,x_2)> c-1.$ Therefore, $ D_1(x_1,x_2)\geq c=D_2(x_1,x_2)$.
    
    Next, we prove that $D_0(x_1,x_2)\geq D_2(x_1,x_2)$. Again, let $D_2(x_1,x_2)=c$. It follows from Definition \ref{jointdef} that $\Phi(x_1,c_1)\cap\Phi(x_2,c_2-1)=\emptyset$ for all $c_1,c_2\geq 0$ such that $c_1+c_2=c-1$. In particular, $\Phi(x_1,\left \lfloor{\frac{c}{2}}\right \rfloor)\cap\Phi(x_2,\left \lceil{\frac{c}{2}}\right \rceil-1)=\emptyset$. Since $|\left \lfloor{\frac{c}{2}}\right \rfloor-(\left \lceil{\frac{c}{2}}\right \rceil-1)|\leq 1,$ it implies $D_0(x_1,x_2)>c-1$ by Definition \ref{def2}. Hence, $ D_0(x_1,x_2)\geq c=D_2(x_1,x_2)$.

    Now, we prove that for an error-linear generalized network channel $F$, $D_1(x_1,x_2)=D_2(x_1,x_2)$. Since $F$ is error-linear, then Properties 1) to 7) in Section~\ref{sec7} are satisfied. As before, let $D_2(x_1,x_2)=c$. Then there exists $z_1,z_2\in\mathcal{E}$ with $w(z_1)=c_1,w(z_2)=c_2,$ and $ c_1+c_2=c$ such that \[   f(x_1)\oplus h(z_1)=f(x_2)\oplus h(z_2),\] or \[ f(x_1)\oplus 0=f(x_2)\oplus h(z_2)\oplus (h(z_1))^{-1}=f(x_2)\oplus(h(z_2)\oplus (h(z_1)^{-1})=f(x_2)\oplus h(z_2\circ (z_1)^{-1}).\] Since \[w(z_2\circ z_1^{-1})\leq w(z_2)+w(z_1^{-1})=w(z_1)+w(z_2)=c_1+c_2=c,\] there exists $z\in\mathcal{E}$ with $w(z)\leq c$ such that $ f(x_1)\oplus h(0)=f(x_2)\oplus h(z)$, which is $F(x_1,0)=F(x_2,z)$. Therefore, $ D_1(x_1,x_2)\leq c=D_2(x_1,x_2)$. Together with the fact that $D_1(x_1,x_2)\geq D_2(x_1,x_2)$ which is already proved at the beginning of the proof, we obtain that $D_1(x_1,x_2)=D_2(x_1,x_2)$. According to Theorem \ref{d0d1coincide}, we have $D_0(x_1,x_2)=D_1(x_1,x_2)$ when $F$ is error-linear. Hence, we conclude that $D_0(x_1,x_2)=D_1(x_1,x_2)=D_2(x_1,x_2)$.

\end{IEEEproof}

Note that in the proof of Theorem \ref{d_2theo}, $F$ is required to be error-linear instead of linear, i.e., only $h$ but not $f$ in the transfer function $F$ is required to be ``linear" (namely the group homomorphism of $h$); $f$ can be arbitrary and $\cal E$ and $\cal Y$ are not necessarily vector spaces. Therefore, Theorem \ref{d_2theo} says that for a generalized network channel $F$, if Properties 1) to 7) but not necessarily Properties 8 and 9 in Section~\ref{sec7} are satisfied, i.e., $F$ is error-linear but not necessarily linear, then the three distances $D_0, D_1$ and $D_2$ coincide.

\vspace{2mm}
The following corollary of Theorem \ref{d_2theo} is evident.
\vspace{2mm}
\begin{corollary}
\label{d_2co}
    For a generalized network code $(\mathcal{C},F)$, $d_0^{\min}\geq d_2^{\min}$ and $d_1^{\min}\geq d_2^{\min}$. When $F$ is error-linear, $d_0^{\min}= d_1^{\min} =d_2^{\min}$.
\end{corollary}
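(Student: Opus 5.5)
The plan is to derive Corollary~\ref{d_2co} directly from the pointwise statements of Theorem~\ref{d_2theo} by taking minima over pairs of distinct codewords. Recall that $d_0^{\min}$, $d_1^{\min}$ and $d_2^{\min}$ are defined in \eqref{d0}, \eqref{d1} and \eqref{d2} as minima of $D_0$, $D_1$ and $D_2$, respectively. All three minima range over the same index set $\{(x_1,x_2)\in\mathcal{C}\times\mathcal{C}: x_1\neq x_2\}$. This set is nonempty because $|\mathcal{C}|\ge 2$.

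For the first part, I would use the elementary fact that if $g(\cdot)\ge k(\cdot)$ pointwise on a common nonempty finite index set, then $\min g\ge \min k$. To see this, choose a distinct pair $(x_1^*,x_2^*)$ attaining $d_0^{\min}$. Theorem~\ref{d_2theo} gives $D_0(x_1^*,x_2^*)\ge D_2(x_1^*,x_2^*)$, so
\[
d_0^{\min}=D_0(x_1^*,x_2^*)\ge D_2(x_1^*,x_2^*)\ge d_2^{\min}.
\]
The same argument with $D_1$ in place of $D_0$ yields $d_1^{\min}\ge d_2^{\min}$.

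For the second part, assume $F$ is error-linear. Theorem~\ref{d_2theo} then gives $D_0=D_1=D_2$ as functions on $\mathcal{C}\times\mathcal{C}$. Minimizing identical functions over the same index set gives identical minima, so $d_0^{\min}=d_1^{\min}=d_2^{\min}$.

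The only point needing care is that the minima actually exist. Each distance takes values in the nonnegative integers, and a minimum of nonnegative integers is always attained, even over an infinite codebook. Each value is also finite whenever the relevant balls eventually intersect, which is the implicit standing assumption of the paper. Apart from this, I do not expect any real obstacle, since the corollary is a straightforward monotonicity-of-minimum argument on top of the already-proved pointwise inequalities.
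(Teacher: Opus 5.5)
Your proposal is correct and follows the paper's route: the paper calls the corollary evident from Theorem~\ref{d_2theo}, namely by minimizing the pointwise inequalities $D_0\ge D_2$ and $D_1\ge D_2$, and in the error-linear case the equalities $D_0=D_1=D_2$, over all distinct pairs $x_1\neq x_2$. Your remark that the minima are attained because the distances take nonnegative integer values is a harmless extra detail that the paper leaves implicit.
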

\vspace{2mm}

Corollary \ref{d_2co} asserts that $d_2^{\min}$ is a lower bound on $d_0^{\min}$. Next we prove an upper bound on $d_0^{\min}$ in terms of $d_2^{\min}$.
\vspace{2mm}

\begin{theorem}
\label{d2ged0}
    Let $(\mathcal{C},F)$ be a generalized network code. For all $x_1,x_2\in\mathcal{C}$ with $x_1\neq x_2$, we have $D_2(x_1,x_2)\geq \left \lceil{\frac{D_0(x_1,x_2)}{2}}\right \rceil$.
\end{theorem}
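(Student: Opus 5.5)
The plan is to take a minimizing pair for $D_2$ and convert it into a feasible (balanced) pair for $D_0$ by enlarging the smaller radius. The tool is the monotonicity of decoding balls, $\Phi(x,c)\subseteq\Phi(x,c')$ for $c\le c'$, noted after Definition~\ref{def1}.

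First, let $D_2(x_1,x_2)=c$. Since $x_1\neq x_2$, we have $c\ge 1$. By Definition~\ref{jointdef} there exist nonnegative integers $c_1,c_2$ with $c_1+c_2=c$ and $\Phi(x_1,c_1)\cap\Phi(x_2,c_2)\neq\emptyset$. Without loss of generality, by the symmetry of both $D_0$ and $D_2$, assume $c_1\ge c_2$, so that $c_1\ge \lceil c/2\rceil$ and $c_1\le c$.

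Next, enlarge the radius at $x_2$ to $c_1$ (or to $c_1-1$). Monotonicity gives
\[
\Phi(x_1,c_1)\cap\Phi(x_2,c_1)\supseteq \Phi(x_1,c_1)\cap\Phi(x_2,c_2)\neq\emptyset,
\]
and the pair $(c_1,c_1)$ satisfies $|c_1-c_1|=0\le 1$. Definition~\ref{def2} therefore yields $D_0(x_1,x_2)\le 2c_1$. One can do slightly better: if $c_2\le c_1-1$, use the pair $(c_1,c_1-1)$, which gives $D_0\le 2c_1-1$; if $c_2=c_1$, we already have $D_0\le c=2c_1$.

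The remaining step, which I expect to be the main obstacle, is relating $c_1$ back to $c$: we only know $c_1\le c$. This yields $D_0(x_1,x_2)\le 2c_1\le 2c=2D_2(x_1,x_2)$, hence $D_2(x_1,x_2)\ge D_0(x_1,x_2)/2$. Since $D_2$ is an integer, this gives $D_2(x_1,x_2)\ge\lceil D_0(x_1,x_2)/2\rceil$, which is the statement. The ceiling comes for free from integrality, so the crude bound $c_1\le c$ already suffices. The delicate extreme case is $c_2=0$, $c_1=c$, where the enlarged pair $(c,c-1)$ gives $D_0\le 2c-1$, still consistent with the claim. The refinement $D_0\le 2c_1-1$ when $c_2<c_1$ can be recorded to indicate tightness but is not needed.
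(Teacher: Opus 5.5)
Your proof is correct and is essentially the paper's argument. The paper enlarges both radii to $c_1+c_2$ rather than enlarging only the smaller one to $\max(c_1,c_2)$, but in both versions monotonicity of the decoding balls produces a balanced feasible pair for $D_0$, giving $D_0(x_1,x_2)\le 2D_2(x_1,x_2)$, and integrality supplies the ceiling.

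The refinement you set aside is more than a tightness remark. Since $c\ge 1$, combining $D_0\le 2c_1-1$ (when $c_2<c_1$) with $D_0\le c$ (when $c_2=c_1$) gives $D_0(x_1,x_2)\le 2D_2(x_1,x_2)-1$, i.e., $D_2(x_1,x_2)\ge\lfloor D_0(x_1,x_2)/2\rfloor+1$, which strictly improves the stated bound when $D_0(x_1,x_2)$ is even.
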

\vspace{2mm}
\begin{IEEEproof}
    Let $(\mathcal{C},F)$ be a generalized network code. Consider $x_1,x_2\in\mathcal{C}$ with $x_1\neq x_2$ and let $D_0(x_1,x_2)=c$. Let $c_1$ and $c_2$ be any nonnegative integers such that $\Phi(x_1,c_1)\cap\Phi(x_2,c_2)\neq\emptyset$. Then $\Phi(x_1,c_1+c_2)\cap\Phi(x_2,c_2+c_1)\neq\emptyset$. Since $D_0(x_1,x_2)=c$, we have $(c_1+c_2)+(c_2+c_1)\geq c$, which implies $c_1+c_2\geq \frac{c}{2}$. Since $c_1+c_2$ is an integer, we have $c_1+c_2\geq \left \lceil{\frac{c}{2}}\right \rceil=\left \lceil{\frac{D_0(x_1,x_2)}{2}}\right \rceil$. Thus we have shown that if
     \begin{equation}\label{zhongjian}
         \Phi(x_1,c_1)\cap\Phi(x_2,c_2)\neq\emptyset,
     \end{equation}
     then $c_1+c_2\geq \left \lceil{\frac{D_0(x_1,x_2)}{2}}\right \rceil$. By minimizing over all $c_1$ and $c_2$ satisfying (\ref{zhongjian}), we see that $D_2(x_1,x_2)\geq \left \lceil{\frac{D_0(x_1,x_2)}{2}}\right \rceil$.
     
     
    
\end{IEEEproof}

\vspace{4mm}

\begin{corollary}
\label{d2geod0co}
    For a network code $\mathcal{C}$ through network channel $F$, $d_2^{\min}\geq \left \lceil{\frac{d_0^{\min}}{2}}\right \rceil$.
\end{corollary}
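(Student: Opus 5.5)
The corollary follows from Theorem~\ref{d2ged0} by minimizing over pairs of distinct codewords. The only care needed is that the ceiling function is monotone nondecreasing.

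Choose distinct $x_1^*,x_2^*\in\mathcal{C}$ that attain the minimum in \eqref{d2}, so that $d_2^{\min}=D_2(x_1^*,x_2^*)$. The minimum exists because $D_2$ takes values in the nonnegative integers and $|\mathcal{C}|\ge 2$.

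Apply Theorem~\ref{d2ged0} to this pair:
\[
d_2^{\min}=D_2(x_1^*,x_2^*)\ \ge\ \left\lceil \frac{D_0(x_1^*,x_2^*)}{2}\right\rceil .
\]
By \eqref{d0}, $D_0(x_1^*,x_2^*)\ge d_0^{\min}$, since $x_1^*\ne x_2^*$. The map $t\mapsto\lceil t/2\rceil$ is nondecreasing, so
\[
\left\lceil \frac{D_0(x_1^*,x_2^*)}{2}\right\rceil\ \ge\ \left\lceil \frac{d_0^{\min}}{2}\right\rceil .
\]
Chaining the two inequalities gives $d_2^{\min}\ge\lceil d_0^{\min}/2\rceil$.

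There is no real obstacle here. The one point to watch is the order of comparison: we must start from the pair that minimizes $D_2$, not the pair that minimizes $D_0$, and then bound its $D_0$ value from below by $d_0^{\min}$.
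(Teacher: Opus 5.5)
Your proof is correct and matches the paper's approach. The paper states this corollary without a written proof, as an immediate consequence of Theorem~\ref{d2ged0} obtained by minimizing over distinct codeword pairs. Taking the $D_2$-minimizing pair, bounding its $D_0$ below by $d_0^{\min}$, and using monotonicity of $t\mapsto\lceil t/2\rceil$ is exactly that step written out.
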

\vspace{2mm}
\begin{example}
Let $(\mathcal{C},F)$ be a generalized network code. If $d_0^{\min}\geq 7$, we have $d_2^{\min}\geq \left \lceil{\frac{d_{0}^{min}}{2}}\right \rceil\geq 4$, implying that $C$ is guaranteed to be at least $(1,1)$ or $(0,3)$ joint error-correcting.
\end{example}
\vspace{4mm}

Theorem \ref{d1d0relation} gives a lower bound on $d_1^{\min}$ in terms of $d_0^{\min}$. On the other hand, from Corollaries \ref{d_2co} and \ref{d2geod0co}, we obtain  
\begin{equation*}
\begin{split}
    d_1^{\min}&\geq d_2^{\min}\geq \left \lceil{\frac{d_0^{\min}}{2}}\right \rceil
\end{split}
\end{equation*}
which is also a lower bound on $d_1^{\min}$ in terms of $d_0^{\min}$. However, this bound is less tight than the bound we obtain in Theorem \ref{d1d0relation} because $\left \lfloor{\frac{d_0^{\min}}{2}}\right \rfloor+1\geq \left \lceil{\frac{d_0^{\min}}{2}}\right \rceil$, where the $\geq$ is strict when $d_0^{\min}$ is odd. Also from Corollaries \ref{d_2co} and \ref{d2geod0co}, we obtain
\begin{equation*}
    d_0^{\min}\geq d_2^{\min}\geq \left \lceil{\frac{d_0^{\min}}{2}}\right \rceil,
\end{equation*}
giving a lower bound and an upper bound on $d_2^{\min}$ in terms of $d_0^{\min}$.  

As mentioned before, $D_1$ and $D_2$ are not metrics in general. The following theorem is a necessary condition for $D_1$ and $D_2$ to be metrics.
\vspace{2mm}

\begin{theorem}
\label{d1d2metric}
    Let $(\mathcal{C},F)$ be a generalized network code. If for all $x_1,x_2\in\mathcal{C}$, $D_1(x_1,x_2)=D_2(x_1,x_2)$, then $D_1$ and $D_2$ are metrics.
    
\end{theorem}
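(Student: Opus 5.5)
We need to show that if $D_1=D_2$ on $\mathcal{C}\times\mathcal{C}$, then this common function $D$ is a metric. Nonnegativity is already known for both. Symmetry comes from $D_2$: Definition~\ref{jointdef} is symmetric in $x_1,x_2$, since $\Phi(x_1,c_1)\cap\Phi(x_2,c_2)\neq\emptyset$ iff $\Phi(x_2,c_2)\cap\Phi(x_1,c_1)\neq\emptyset$ and $c_1+c_2$ is unchanged. Hence $D_1(x_1,x_2)=D_2(x_1,x_2)=D_2(x_2,x_1)=D_1(x_2,x_1)$, and $D_1$ is symmetric too. The real content is the triangular inequality.

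For $D(x,y)+D(y,u)\ge D(x,u)$ we mix the two descriptions. Let $a=D_1(x,y)$ and $b=D_1(y,u)$. Using symmetry, $D_1(y,x)=a$, so by Definition~\ref{def3} there is $z$ with $w(z)\le a$ and $F(y,0)=F(x,z)$. Similarly, $D_1(y,u)=b$ gives $z'$ with $w(z')\le b$ and $F(y,0)=F(u,z')$. Then
\[
F(x,z)=F(y,0)=F(u,z')\in\Phi(x,a)\cap\Phi(u,b),
\]
so this intersection is nonempty. By Definition~\ref{jointdef}, $D_2(x,u)\le a+b$. Since $D_2=D_1$, this reads $D(x,u)\le D(x,y)+D(y,u)$.

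The only delicate step is the orientation of $D_1$. The zero error must sit at the intermediate codeword $y$, so that both balls meet at the single point $F(y,0)$. This needs $D_1(y,x)$ rather than $D_1(x,y)$, which is exactly why symmetry, inherited from $D_2$ through the hypothesis, is established first. The argument uses no error-linearity; it relies only on the hypothesis $D_1=D_2$ and the definitions.
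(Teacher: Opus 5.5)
Your proof is correct and follows essentially the same route as the paper. Both arguments inherit symmetry of $D_1$ from $D_2$ through the hypothesis, then note that $F(y,0)$ lies in $\Phi(x,D_1(y,x))\cap\Phi(u,D_1(y,u))$, which gives $D_2(x,u)\le D_1(x,y)+D_1(y,u)$, and the hypothesis $D_1=D_2$ converts this into the triangle inequality.
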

\vspace{2mm}
\begin{IEEEproof}
    Obviously, $D_2$ always satisfies nonnegativity and symmetry. So we need to prove that if $D_2(x_1,x_2)=D_1(x_1,x_2)$ for all $x_1,x_2\in\mathcal{C}$, then $D_2$ also satisfies the triangular inequality. It then follows that both $D_1$ and $D_2$ are metrics.

    Let $ x_1,x_2,x_3$ be codewords in $\mathcal{C}$ and assume that $D_1=D_2$ holds for all codeword pairs. Since $D_2$ is symmetrical, $D_1$ is also symmetrical. Let $D_1(x_1,x_2)=D_1(x_2,x_1)=c_1$. Since $D_1(x_2,x_1)=c_1$, we have $\Phi(x_1,c_1)\cap\Phi(x_2,0)\neq\emptyset$, implying that $F(x_2,0)\in\Phi(x_1,c_1)$. Similarly, let $D_1(x_2,x_3)=D_1(x_3,x_2)=c_2$. Since $D_1(x_2,x_3)=c_2$, we have $\Phi(x_3,c_2)\cap\Phi(x_2,0)\neq\emptyset$, implying that $F(x_2,0)\in\Phi(x_3,c_2)$. Since $F(x_2,0)\in\Phi(x_1,c_1)$ and $F(x_2,0)\in\Phi(x_3,c_2)$, we have $\Phi(x_1,c_1)\cap\Phi(x_3,c_2)\neq\emptyset$, implying that $D_2(x_1,x_3)\leq c_1+c_2=D_1(x_1,x_2)+D_1(x_2,x_3)=D_2(x_1,x_2)+D_2(x_2,x_3)$. So $D_2$ satisfies the triangular inequality. Hence, $D_2$ is a metric, and so is $D_1$.
    
    
\end{IEEEproof}

\vspace{3mm}

\begin{corollary}\label{linearcorollary}
    For an error-linear generalized network channel $F$, $D_0=D_1=D_2$, and they are all metrics.
\end{corollary}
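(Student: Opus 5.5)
The corollary follows by chaining Theorem~\ref{d_2theo} with Theorem~\ref{d1d2metric}; nothing new needs to be constructed. Fix an error-linear generalized network channel $F$, so that Properties~1) to 7) in Section~\ref{sec7} hold, and let $\mathcal{C}$ be any codeword set. By the last assertion of Theorem~\ref{d_2theo}, we have $D_0(x_1,x_2)=D_1(x_1,x_2)=D_2(x_1,x_2)$ for all $x_1,x_2\in\mathcal{C}$. This gives the first claim, $D_0=D_1=D_2$, as functions on $\mathcal{C}\times\mathcal{C}$.

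For the metric claim, the equality $D_1=D_2$ on all pairs is exactly the hypothesis of Theorem~\ref{d1d2metric}. That theorem then yields that $D_1$ and $D_2$ are metrics. Since $D_0$ coincides pointwise with $D_2$, it inherits nonnegativity, symmetry and the triangular inequality. Hence $D_0$ is also a metric, and this settles the claim promised in the footnote after Definition~\ref{def2}.

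The only point needing care is to confirm that the hypotheses of the invoked results match. Theorem~\ref{d_2theo} requires only error-linearity, not linearity, and Theorem~\ref{d1d2metric} requires only the pointwise equality $D_1=D_2$; both are available here. I expect no real obstacle beyond this bookkeeping. It would also be worth noting in passing that nonnegativity of $D_0$ was already established directly after Definition~\ref{def2}, which is consistent with the conclusion.
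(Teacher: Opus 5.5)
Your proposal is correct and matches the paper's proof, which likewise just combines Theorem~\ref{d_2theo} (error-linearity gives $D_0=D_1=D_2$) with Theorem~\ref{d1d2metric} (pointwise $D_1=D_2$ makes $D_1$ and $D_2$ metrics). Your extra remark that $D_0$ inherits the metric axioms because it equals $D_2$ pointwise is the only detail the paper leaves implicit.
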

\vspace{1mm}
\begin{IEEEproof}
   Consider a generalized network code $(\mathcal{C},F)$ where $F$ is error-linear. Combining Theorem \ref{d_2theo} and Theorem \ref{d1d2metric}, we obtain that $D_0=D_1=D_2$ and they are metrics.
\end{IEEEproof}

\subsection{Refined Distance For Joint Error Correction}
As discussed in Theorem \ref{d_2th}, $d_2^{\min}\geq 2c+c'+1$ is a sufficient but not necessary condition for a generalized network code $(\mathcal{C},F)$ to be $(c,c')$ joint error-correcting. In this subsection, we will present a refined version of $D_2$ and $d_2^{\min}$ from which we can obtain a necessary and sufficient condition for $(c,c')$ joint error correction.\\

\begin{definition}
\label{d2'def1}
    Let $(\mathcal{C},F)$ be a generalized network code and $x_1,x_2\in\mathcal{C}$. For a fixed nonnegative integer $c$, the refined distance for $(c,c')$ joint error correction between $x_1$ and $x_2$ is defined as
    \begin{equation*}
        D_2[c](x_1,x_2)=\min\limits_{c'\geq 0}\{c': \Phi(x_1,c)\cap\Phi(x_2,c+c')\neq\emptyset\}.
    \end{equation*}
\end{definition}

Note that $D_2[c]$ is the number of additional errors that can be detected by the ${\rm MWD}_w(c)$ in joint error correction, given that $c$ errors can be corrected. From the definition, $D_2[c](x_1,x_2)=0$ if $c\geq \left \lfloor{\frac{D_0(x_1,x_2)+1}{2}}\right \rfloor=\left \lfloor{\frac{D_0(x_1,x_2)-1}{2}}\right \rfloor+1$. Since for $c\geq \left \lfloor{\frac{D_0(x_1,x_2)+1}{2}}\right \rfloor$, we always have $\Phi(x_1,c)\cap\Phi(x_2,c)\neq\emptyset$, so that the minimum $c'$ such that $\Phi(x_1,c)\cap\Phi(x_2,c+c')\neq\emptyset$ is $0$. Also note that $D_2[0](x_1,x_2)=D_1(x_1,x_2)$ from Definition \ref{def3} and Definition \ref{d2'def1}.
\vspace{2mm}

\begin{proposition}\label{prop5.13}
    Let $(\mathcal{C},F)$ be a generalized network code. For all $x_1,x_2\in\mathcal{C}$, $D_2[c](x_1,x_2)$ is nonincreasing in $c$.
\end{proposition}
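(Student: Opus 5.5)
It suffices to show that $D_2[c+1](x_1,x_2)\le D_2[c](x_1,x_2)$ for every nonnegative integer $c$ and all $x_1,x_2\in\mathcal{C}$. The argument uses only the monotonicity of decoding balls noted after Definition~\ref{def1}: $\Phi(x,a)\subseteq\Phi(x,b)$ whenever $a\le b$.

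Fix $x_1,x_2$ and $c$, and let $c'=D_2[c](x_1,x_2)$. By Definition~\ref{d2'def1},
\[
\Phi(x_1,c)\cap\Phi(x_2,c+c')\neq\emptyset .
\]
We treat two cases.

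If $c'\ge 1$, then $c+c'=(c+1)+(c'-1)$. Monotonicity gives $\Phi(x_1,c)\subseteq\Phi(x_1,c+1)$, so
\[
\Phi(x_1,c+1)\cap\Phi(x_2,(c+1)+(c'-1))\neq\emptyset .
\]
Since $c'-1\ge 0$ is an admissible value in the minimization defining $D_2[c+1](x_1,x_2)$, we get $D_2[c+1](x_1,x_2)\le c'-1<c'$.

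If $c'=0$, then $\Phi(x_1,c)\cap\Phi(x_2,c)\neq\emptyset$. Enlarging both balls by monotonicity gives $\Phi(x_1,c+1)\cap\Phi(x_2,c+1)\neq\emptyset$, so $D_2[c+1](x_1,x_2)=0=c'$.

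In both cases $D_2[c+1](x_1,x_2)\le D_2[c](x_1,x_2)$. Induction on $c$ then gives the nonincreasing property.

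There is no real obstacle here. The only point needing care is that the minimization in Definition~\ref{d2'def1} ranges over nonnegative $c'$, which is why the case $c'=0$ must be handled separately.
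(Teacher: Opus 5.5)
Your proof is correct and uses the same idea as the paper's: because decoding balls are nested, a witness in $\Phi(x_1,c)\cap\Phi(x_2,c+c')$ remains a witness when the correction radius is enlarged. The paper argues by contradiction for an arbitrary pair $\alpha>\beta$ in one step, whereas you argue directly from $c$ to $c+1$ and chain the steps, which also gives the slightly sharper bound $D_2[c+1](x_1,x_2)\le\max\{D_2[c](x_1,x_2)-1,\,0\}$.
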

\vspace{2mm}
\begin{IEEEproof}
    Let $(\mathcal{C},F)$ be a generalized network code and $x_1,x_2\in\mathcal{C}$. If $x_1=x_2$, then $D_2[c](x_1,x_2)=0$ for all $c\geq0$ and the proposition is immediate. So, we only need to prove the case that $x_1\neq x_2.$ Let $\alpha,\beta$ be two integers such that $\alpha>\beta\geq0,$ and let $D_2[\alpha](x_1,x_2)=\alpha'$, and $D_2[\beta](x_1,x_2)=\beta'$. Note that since $\alpha,\beta$ are integers, we have $\alpha\geq \beta+1.$ We now prove the theorem by contradiction. Assume that $\alpha'>\beta'$, which is $\alpha'\geq\beta'+1$. By Definition \ref{d2'def1}, we obtain
    \begin{align}
         \Phi(x_1,\alpha)\cap\Phi(x_2,\alpha+\alpha'-1)&=\emptyset\label{3}\\
         \Phi(x_1,\beta)\cap\Phi(x_2,\beta+\beta')&\neq\emptyset.\label{4}
     \end{align}
     Note that (\ref{3}) comes from the fact that $D_2[\alpha](x_1,x_2)=~\alpha'$ and (\ref{4}) comes from $D_2[\beta](x_1,x_2)=\beta'.$ Since $\alpha\geq \beta+1$ and $\alpha'>\beta'$, we obtain \[\alpha+\alpha'-1\geq \beta+\alpha'>\beta+\beta',\]
     then 
     \begin{align*}
         \Phi(x_1,\beta)&\subseteq\Phi(x_1,\alpha)\\
         \Phi(x_2,\beta+\beta')&\subseteq\Phi(x_2,\alpha+\alpha'-1).
     \end{align*}
     So, (\ref{3}) and (\ref{4}) can not hold at the same time, i.e., we get a contradiction between (\ref{3}) and (\ref{4}). Hence, the proposition is proved.
\end{IEEEproof}
\vspace{2mm}

Write \[\tau(x_1,x_2)=\left \lfloor{\frac{D_0(x_1,x_2)+1}{2}}\right \rfloor\] and \[c^*(x_1,x_2)=\left \lfloor{\frac{D_0(x_1,x_2)}{2}}\right \rfloor.\]
Note that $\tau(x_1,x_2)=\tau(x_2,x_1)$ and $c^*(x_1,x_2)=c^*(x_2,x_1)$ by the symmetry of $D_0$. In the sequel, when there is no ambiguity, we will abbreviate $\tau(x_1,x_2)$ and $c^*(x_1,x_2)$ to $\tau$ and $c^*$, respectively.
\vspace{2mm}
\begin{theorem}\label{foundation}
    Let $(\mathcal{C},F)$ be a generalized network code. For all $x_1,x_2\in\mathcal{C}$, we have $D_2[c](x_1,x_2)\leq D_1(x_1,x_2)$, with equality when $c=0$. Moreover, $D_2[c](x_1,x_2)=0$ if and only if $c\geq \tau$.
\end{theorem}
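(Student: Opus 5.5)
The statement has three parts: the inequality $D_2[c](x_1,x_2)\le D_1(x_1,x_2)$, equality at $c=0$, and the characterization of when $D_2[c]$ vanishes. The first two should follow from Proposition~\ref{prop5.13} and the remark after Definition~\ref{d2'def1}. The characterization is the substantive part, and it should come from unpacking Definition~\ref{def2} carefully in terms of the parity of $D_0$.

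\emph{Equality at $c=0$ and the inequality.} By Definition~\ref{d2'def1} with $c=0$, $D_2[0](x_1,x_2)$ is the least $c'$ with $\Phi(x_1,0)\cap\Phi(x_2,c')\neq\emptyset$. This is exactly $D_1(x_1,x_2)$ by Definition~\ref{def3}, as already noted after Definition~\ref{d2'def1}. Proposition~\ref{prop5.13} says $D_2[c]$ is nonincreasing in $c$, so $D_2[c](x_1,x_2)\le D_2[0](x_1,x_2)=D_1(x_1,x_2)$ for every $c\ge 0$. If $x_1=x_2$, everything is $0$ and $\tau=0$, so the last claim is trivial; from here on I take $x_1\neq x_2$.

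\emph{Reformulating the vanishing condition.} By definition, $D_2[c](x_1,x_2)=0$ holds iff $\Phi(x_1,c)\cap\Phi(x_2,c)\neq\emptyset$. Nested balls make this property monotone in $c$. So it suffices to show two things: the intersection is nonempty at $c=\tau$, and it is empty at $c=\tau-1$ (when $\tau\ge1$). Set $d=D_0(x_1,x_2)$.

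\emph{Sufficiency ($c\ge\tau$).} By Definition~\ref{def2}, $d$ is attained by some $(c_1,c_2)$ with $c_1+c_2=d$, $|c_1-c_2|\le1$ and $\Phi(x_1,c_1)\cap\Phi(x_2,c_2)\neq\emptyset$. Then $\max(c_1,c_2)=\lceil d/2\rceil=\lfloor (d+1)/2\rfloor=\tau$. Enlarging both radii to $\tau$ keeps the intersection nonempty, so $\Phi(x_1,\tau)\cap\Phi(x_2,\tau)\neq\emptyset$, and hence the same holds for every $c\ge\tau$.

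\emph{Necessity ($c<\tau$).} Suppose $\Phi(x_1,c)\cap\Phi(x_2,c)\neq\emptyset$ for some $c$. The pair $(c,c)$ is admissible in Definition~\ref{def2}, so $d\le 2c$, which gives $c\ge\lceil d/2\rceil=\tau$. Contrapositively, $c\le\tau-1$ forces an empty intersection and hence $D_2[c]>0$.

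The only delicate step is the identity $\lceil d/2\rceil=\lfloor (d+1)/2\rfloor$ together with the claim that the maximum of a near-balanced split $c_1+c_2=d$, $|c_1-c_2|\le1$, equals $\lceil d/2\rceil$. Both are routine parity checks. Everything else is direct from the definitions and Proposition~\ref{prop5.13}.
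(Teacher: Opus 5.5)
Your proof is correct, but it reaches the inequality by a different route from the paper.

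\textbf{Inequality.} You obtain $D_2[c](x_1,x_2)\le D_2[0](x_1,x_2)=D_1(x_1,x_2)$ from the monotonicity in Proposition~\ref{prop5.13}. The paper does not use that proposition. It argues directly from $\Phi(x_1,0)\subseteq\Phi(x_1,c)$: if $D_2[c](x_1,x_2)=c_2\ge 1$, then $\Phi(x_1,c)\cap\Phi(x_2,c+c_2-1)=\emptyset$. Hence $\Phi(x_1,0)\cap\Phi(x_2,c+c_2-1)=\emptyset$, and so $D_1(x_1,x_2)\ge c+c_2$.

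Your route is shorter and reuses an already-proved lemma. It also needs no attention to the case $D_2[c]=0$, which the paper's displayed step tacitly excludes; that case is trivial. The paper's argument is self-contained and actually yields the sharper bound $D_1(x_1,x_2)\ge c+D_2[c](x_1,x_2)$ whenever $c<\tau$, although the paper only records $D_1\ge D_2[c]$.

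\textbf{Vanishing criterion.} Your necessity direction is identical to the paper's: the pair $(c,c)$ is admissible in Definition~\ref{def2}. For sufficiency, your argument is more complete than the paper's. The paper only cites the remark after Definition~\ref{d2'def1}, which asserts $\Phi(x_1,c)\cap\Phi(x_2,c)\neq\emptyset$ for $c\ge\tau$ without proof. You take an optimal near-balanced pair for $D_0$, note that its larger entry is $\lceil D_0(x_1,x_2)/2\rceil=\tau$, and enlarge both radii, which closes that step.
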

\vspace{2mm}
\begin{IEEEproof}
    Let $(\mathcal{C},F)$ be a generalized network code and $x_1,x_2\in\mathcal{C}$. First, we prove that $D_2[c](x_1,x_2)\leq D_1(x_1,x_2)$, with equality when $c=0$. Let $D_1(x_1,x_2)=c_1$ and $D_2[c](x_1,x_2)=c_2$. According to the definitions of $D_1$ and $D_2[c]$, we have 
    \begin{equation}
    \begin{split}
        \Phi(x_1,0)&\cap\Phi(x_2,c_1)\neq\emptyset\\
        \Phi(x_1,0)&\cap\Phi(x_2,c_1-1)=\emptyset\\
        \Phi(x_1,c)&\cap\Phi(x_2,c+c_2)\neq\emptyset\\
        \Phi(x_1,c)&\cap\Phi(x_2,c+c_2-1)=\emptyset.
    \end{split}
    \end{equation}
    Since $\Phi(x_1,0)\subset \Phi(x_1,c)$, we have 
    \[\Phi(x_1,0)\cap\Phi(x_2,c+c_2-1)=\emptyset.\]
    Together with $\Phi(x_1,0)\cap\Phi(x_2,c_1)\neq\emptyset$, we see that $c_1> c+c_2-1$. Hence, $c_1\geq c_2$, or $D_1(x_1,x_2)\geq D_2[c](x_1,x_2)$. When $c=0$, it follows directly from the definitions that $D_1(x_1,x_2)=D_2[0](x_1,x_2)$.

    Next, we prove that $D_2[c](x_1,x_2)=0$ if and only if $c\geq \tau$.
    As discussed in the paragraph following Definition \ref{d2'def1}, if $c\geq \tau$, we have $D_2[c](x_1,x_2)=0$. So, we only need to prove that if $D_2[c](x_1,x_2)=0$, then $c\geq \tau$.

     Assume that $D_2[c](x_1,x_2)=0$. Then $\Phi(x_1,c)\cap\Phi(x_2,c)\neq \emptyset$, implying that $D_0(x_1,x_2)\leq 2c$. Since $c$ is a nonnegative integer, we have $c\geq \left \lceil{\frac{D_0(x_1,x_2)}{2}}\right \rceil= \left \lfloor{\frac{D_0(x_1,x_2)+1}{2}}\right \rfloor=\tau$. Hence $D_2[c](x_1,x_2)=0$ if and only if $c\geq \tau$.
\end{IEEEproof}
\vspace{4mm}

Similar to the definition of $d_2^{\min}$, for a network code $\mathcal{C}$ and a fixed nonnegative integer $c$, we can define the refined minimum distance for $(c,c')$ joint error correction as

\begin{equation*}
    d_2^{\min}[c]=\min\limits_{x_1,x_2\in\mathcal{C},x_1\neq x_2}D_2[c](x_1,x_2).    
\end{equation*}
Now we show that $d_2^{\min}[c]\geq c'+1$ is a necessary and sufficient condition for $\mathcal{C}$ to be $(c,c')$ joint error-correcting. \\

\begin{theorem}
\label{d2'theo1}
    A generalized network code $(\mathcal{C},F)$ is $(c,c')$ joint error-correcting if and only if $d_2^{\min}[c]\geq c'+1$.
\end{theorem}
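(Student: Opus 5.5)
The plan is to reduce the statement to Theorem~\ref{joint}, which characterizes $(c,c')$ joint error correction by the disjointness condition $\Phi(x_1,c)\cap\Phi(x_2,c+c')=\emptyset$ for all distinct $x_1,x_2\in\mathcal{C}$. It then suffices to show that, for each fixed pair of distinct codewords, this disjointness is equivalent to $D_2[c](x_1,x_2)\geq c'+1$. Taking the minimum over all distinct pairs turns this pairwise equivalence into the equivalence with $d_2^{\min}[c]\geq c'+1$.

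The key tool is monotonicity of the decoding balls: $\Phi(x_2,a)\subseteq\Phi(x_2,b)$ whenever $a\leq b$. Hence the set of $c''\geq 0$ with $\Phi(x_1,c)\cap\Phi(x_2,c+c'')\neq\emptyset$ is upward closed. Its minimum, which is $D_2[c](x_1,x_2)$ by Definition~\ref{d2'def1}, is therefore exceeded by $c'$ exactly when $c'$ does not lie in this set.

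For sufficiency, assume $d_2^{\min}[c]\geq c'+1$. Then $D_2[c](x_1,x_2)>c'$ for every distinct pair. By the definition of $D_2[c]$ as a minimum, this forces $\Phi(x_1,c)\cap\Phi(x_2,c+c')=\emptyset$, and Theorem~\ref{joint} gives $(c,c')$ joint error correction.

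For necessity, assume the code is $(c,c')$ joint error-correcting. By Theorem~\ref{joint}, $\Phi(x_1,c)\cap\Phi(x_2,c+c')=\emptyset$ for every distinct pair. By monotonicity, $\Phi(x_1,c)\cap\Phi(x_2,c+c'')=\emptyset$ also holds for every $0\leq c''\leq c'$. So no $c''\leq c'$ is feasible in the minimization, giving $D_2[c](x_1,x_2)\geq c'+1$, and minimizing over pairs yields $d_2^{\min}[c]\geq c'+1$.

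The only subtle point is that the minimum in Definition~\ref{d2'def1} must be attained, so that $D_2[c]$ is well defined. We should either check that some $c''$ makes the intersection nonempty, or adopt the convention that the minimum of an empty set is $+\infty$; with that convention both directions remain valid. Apart from this, the argument is a direct unwinding of the definitions, and I expect no real obstacle.
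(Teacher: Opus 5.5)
Your proposal is correct and takes the same route as the paper: both directions are reduced pairwise to Theorem~\ref{joint} through the definition of $D_2[c]$. Your explicit use of the nesting of decoding balls (needed in the ``only if'' direction to rule out every $c''\le c'$) and of the convention $\min\emptyset=+\infty$ fills in steps the paper leaves implicit. One wording slip: in your second paragraph, $D_2[c](x_1,x_2)$ \emph{exceeds} $c'$ (it is not ``exceeded by $c'$'') exactly when $c'$ is infeasible, though your later paragraphs use the correct direction.
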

\vspace{2mm}
\begin{IEEEproof}
    First, we prove the ``only if" part. 
    According to Theorem~\ref{joint}, $(\mathcal{C},F)$ is $(c,c')$ joint error-correcting if and only if for all $ x_1,x_2\in\mathcal{C}$, $\Phi(x_1,c)\cap\Phi(x_2,c+c')=\emptyset$. So for all $x_1,x_2\in\mathcal{C}$, we have $D_2[c](x_1,x_2)\geq c'+1$ according to Definition \ref{d2'def1}. Hence, $d_2^{\min}[c]\geq c'+1$.\\
    
    Next, we prove the ``if" part. 
    If $d_2^{\min}[c]\geq c'+1$, then for all $x_1,x_2\in\mathcal{C}$, we have $D_2[c](x_1,x_2)\geq c'+1$. In that case, for all $x_1,x_2\in\mathcal{C}$, $\Phi(x_1,c)\cap\Phi(x_2,c+c')=\emptyset$, so $\mathcal{C}$ is $(c,c')$ joint error-correcting according to Theorem \ref{joint}.
    
\end{IEEEproof}

\vspace{2mm}



\begin{lemma}
\label{lemma1}
    Consider a generalized network code $(\mathcal{C},F)$ and any $x_1,x_2\in\mathcal{C}$. Then
    \begin{itemize}
        \item  $D_2[c^*](x_1,x_2)=D_2[c^*](x_2,x_1)=0$ if $D_0(x_1,x_2)$ is even; 
        \vspace{2mm}
        \item  $\min\left(D_2[c^*](x_1,x_2),\ D_2[c^*](x_2,x_1)\right)=1$ if\\
        $D_0(x_1,x_2)$ is odd. 

    \end{itemize}
\end{lemma}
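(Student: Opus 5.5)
Both parts follow from Theorem~\ref{foundation}, the definition of $D_0$, and the symmetry of $D_0$ (hence of $\tau$ and $c^*$). The case $x_1=x_2$ is trivial: $D_0=0$ is even, $c^*=0$, and $D_2[0](x_1,x_1)=D_1(x_1,x_1)=0$. So assume $x_1\neq x_2$ and write $D_0(x_1,x_2)=d\ge 1$.

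\emph{Even case.} If $d$ is even, then $c^*=d/2$ and $\tau=\lfloor (d+1)/2\rfloor=d/2$, so $c^*=\tau$. Theorem~\ref{foundation} states that $D_2[c](x_1,x_2)=0$ if and only if $c\ge\tau$, so $D_2[c^*](x_1,x_2)=0$. Since $\tau(x_2,x_1)=\tau(x_1,x_2)$, the same argument applied to the ordered pair $(x_2,x_1)$ gives $D_2[c^*](x_2,x_1)=0$.

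\emph{Odd case.} If $d=2k+1$, then $c^*=k$ and $\tau=k+1>c^*$. By Theorem~\ref{foundation}, both $D_2[c^*](x_1,x_2)\ge 1$ and $D_2[c^*](x_2,x_1)\ge 1$, so the minimum is at least $1$. For the reverse inequality I use the definition of $D_0$. It gives nonnegative $c_1,c_2$ with $c_1+c_2=2k+1$, $|c_1-c_2|\le 1$, and $\Phi(x_1,c_1)\cap\Phi(x_2,c_2)\neq\emptyset$. The parity of the sum together with $|c_1-c_2|\le 1$ forces $\{c_1,c_2\}=\{k,k+1\}$.
\begin{itemize}
\item If $c_1=k$ and $c_2=k+1$, then $\Phi(x_1,k)\cap\Phi(x_2,k+1)\neq\emptyset$, so $D_2[k](x_1,x_2)\le 1$.
\item If $c_1=k+1$ and $c_2=k$, then $\Phi(x_2,k)\cap\Phi(x_1,k+1)\neq\emptyset$, so $D_2[k](x_2,x_1)\le 1$.
\end{itemize}
In either case the minimum is at most $1$, hence exactly $1$.

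The only step needing care is the attainment of the minimum in Definition~\ref{def2} together with the parity argument pinning down $\{c_1,c_2\}=\{k,k+1\}$. The minimum is attained because the weights are nonnegative integers, so the set of feasible sums is a nonempty set of nonnegative integers. It is nonempty because $d$ is assumed finite, as it implicitly is throughout the paper. Beyond this, the argument is routine.
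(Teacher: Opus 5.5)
Your proof is correct and essentially matches the paper's: in the odd case, the upper bound of $1$ comes, exactly as in the paper, from the parity argument forcing the $D_0$-minimizer to be $\{c^*,c^*+1\}$. The only difference is that you get the even case and the odd-case lower bound by citing Theorem~\ref{foundation} (via $c^*=\tau$ or $c^*<\tau$ and the symmetry of $\tau$), whereas the paper reads them directly off Definition~\ref{def2}, which is the same reasoning that underlies Theorem~\ref{foundation}.
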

\vspace{2mm}
\begin{IEEEproof}
     When $D_0(x_1,x_2)$ is even, we have \[c^*=D_0(x_1,x_2)/2.\] We see from Definition \ref{def2} that
\begin{align}
    \Phi(x_1,c^*)\cap\Phi(x_2,c^*)\neq\emptyset.
\end{align}
Then the minimum $c'$ such that \[\Phi(x_1,c^*)\cap\Phi(x_2,c^*+c')\neq\emptyset\] is $0$, and 
the minimum $c''$ such that \[\Phi(x_2,c^*)\cap\Phi(x_1,c^*+c'')\neq\emptyset\] is also $0$. Hence, by Definition \ref{d2'def1}, $D_2[c^*](x_1,x_2)=D_2[c^*](x_2,x_1)=0$. 
On the other hand, when $D_0(x_1,x_2)$ is odd, we have $D_0(x_1,x_2)=2c^*+1$. Then from Definition \ref{def2},  either
\begin{align*}
        \Phi(x_1,c^*)\cap\Phi(x_2,c^*+1)\neq \emptyset\\
         \Phi(x_1,c^*)\cap\Phi(x_2,c^*)= \emptyset
    \end{align*}
    or
    \begin{align*}
        \Phi(x_1,c^*+1)\cap\Phi(x_2,c^*)\neq \emptyset\\
         \Phi(x_1,c^*)\cap\Phi(x_2,c^*)= \emptyset.
    \end{align*}
Therefore, according to Definition \ref{d2'def1}, either
    \begin{align*}
        D_2[c^*](x_1,x_2)=1,
       \quad D_2[c^*](x_2,x_1)\geq1
    \end{align*}
    or
    \begin{align*}
        D_2[c^*](x_2,x_1)=1,
        \quad D_2[c^*](x_1,x_2)\geq 1.
    \end{align*}
    Hence, $\min\left(D_2[c^*](x_1,x_2),\ D_2[c^*](x_2,x_1)\right)=1$.
\end{IEEEproof}
\vspace{3mm}

\begin{theorem}
\label{d0d1d2'}
    Consider a generalized network code $(\mathcal{C},F)$ and any $x_1,x_2\in\mathcal{C}$. Then
    \[2c^*+ \min\left(D_2[c^*](x_1,x_2),D_2[c^*](x_2,x_1)\right)=D_0(x_1,x_2).\] 
\end{theorem}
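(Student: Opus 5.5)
The identity follows almost directly from Lemma~\ref{lemma1}, so I plan to split into the two parity cases of $D_0(x_1,x_2)$ and handle the degenerate case $x_1=x_2$ separately.

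\emph{Degenerate case.} If $x_1=x_2$, then $D_0=0$ and $c^*=0$. Since $\Phi(x_1,0)\cap\Phi(x_1,0)\neq\emptyset$, both $D_2[0](x_1,x_1)$ and its reverse vanish, so both sides are $0$. This case is also covered by the even case below.

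\emph{Even case.} If $D_0(x_1,x_2)$ is even, then $c^*=D_0(x_1,x_2)/2$. The first bullet of Lemma~\ref{lemma1} gives $D_2[c^*](x_1,x_2)=D_2[c^*](x_2,x_1)=0$. The left-hand side is therefore $2c^*+0=D_0(x_1,x_2)$.

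\emph{Odd case.} If $D_0(x_1,x_2)$ is odd, then $D_0(x_1,x_2)=2c^*+1$. The second bullet of Lemma~\ref{lemma1} says the minimum of the two refined distances equals $1$. The left-hand side is then $2c^*+1=D_0(x_1,x_2)$.

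\emph{Main obstacle.} The only step needing care is the odd case of Lemma~\ref{lemma1}, which I may take as given. It relies on the minimizer in Definition~\ref{def2} with $c_1+c_2=2c^*+1$ and $|c_1-c_2|\le1$ being forced to be $\{c^*,c^*+1\}$ in some order. It also uses $\Phi(x_1,c^*)\cap\Phi(x_2,c^*)=\emptyset$, which holds since otherwise $D_0\le 2c^*<D_0$. Given the lemma, the theorem is immediate arithmetic.
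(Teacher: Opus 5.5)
Your proposal is correct and matches the paper's proof: split on the parity of $D_0(x_1,x_2)$ and read off each case from Lemma~\ref{lemma1}. The only cosmetic difference is in the even case: the paper re-applies the lemma with $x_1,x_2$ swapped, using the symmetry of $D_0$, whereas you take both zeros directly from the lemma's first bullet, which already states them.
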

\vspace{2mm}
\begin{IEEEproof}
    First consider the case that $D_0(x_1,x_2)$ is even, or $D_0(x_1,x_2)=2c^*.$ 
    According to Lemma \ref{lemma1}, we have \[D_2[c^*](x_1,x_2)=0.\] Since $D_0$ is symmetrical, we have \[c^*=\left \lfloor{\frac{D_0(x_1,x_2)}{2}}\right \rfloor=\left \lfloor{\frac{D_0(x_2,x_1)}{2}}\right \rfloor.\] By applying Lemma \ref{lemma1} with the roles of $x_1$ and $x_2$ exchanged, we obtain \[D_2[c^*](x_2,x_1)=0.\] Therefore, \[2c^*+ \min\left(D_2[c^*](x_1,x_2),D_2[c^*](x_2,x_1)\right)=D_0(x_1,x_2).\] 
    
     Now consider the case that $D_0(x_1,x_2)$ is odd, or \[D_0(x_1,x_2)=2c^*+1.\] Then according to Lemma \ref{lemma1}, 
    \begin{align*}
        2c^*+ &\min\left(D_2[c^*](x_1,x_2),D_2[c^*](x_2,x_1)\right)\\&=2c^*+1\\&=D_0(x_1,x_2).
    \end{align*}
    Combining the two cases for $D_0(x_1,x_2)$, we conclude that 
    \begin{align*}
        2c^*+ \min\left(D_2[c^*](x_1,x_2),D_2[c^*](x_2,x_1)\right)=D_0(x_1,x_2).
    \end{align*}

\end{IEEEproof}
\vspace{3mm}    


The next theorem gives the relation between $D_2$ and $D_2[c]$.\\

\begin{theorem}
\label{d2'coro1}
 Let $(\mathcal{C},F)$ be a generalized network code. For any $x_1, x_2\in\mathcal{C}$,
 \begin{align}\label{minc}
    \begin{split}
     D_2(x_1,x_2)=\min\Big(\min\limits_{  c }\{2c+D_2[c](x_1,x_2)\},\ \min\limits_{  c }\{2c+D_2[c](x_2,x_1)\}\Big).
     \end{split}
 \end{align}
\end{theorem}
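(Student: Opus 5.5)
We prove the identity by establishing two inequalities. Throughout, $c$ ranges over nonnegative integers. Write $M$ for the right-hand side of \eqref{minc}, and set $M_{12}=\min_c\{2c+D_2[c](x_1,x_2)\}$ and $M_{21}=\min_c\{2c+D_2[c](x_2,x_1)\}$. These minima are well defined, since by Theorem~\ref{foundation} we have $D_2[c](x_1,x_2)\le D_1(x_1,x_2)<\infty$. Also, $D_2$ is symmetric by Definition~\ref{jointdef}, because the condition $\Phi(x_1,c_1)\cap\Phi(x_2,c_2)\neq\emptyset$ is symmetric under swapping $(x_1,c_1)$ with $(x_2,c_2)$.

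\emph{Step 1: $D_2(x_1,x_2)\le M$.} Fix any $c$ and let $c'=D_2[c](x_1,x_2)$. By Definition~\ref{d2'def1}, $\Phi(x_1,c)\cap\Phi(x_2,c+c')\neq\emptyset$. The pair $(c_1,c_2)=(c,c+c')$ is therefore feasible in Definition~\ref{jointdef}, which gives $D_2(x_1,x_2)\le 2c+c'$. Taking the minimum over $c$ yields $D_2(x_1,x_2)\le M_{12}$. Applying the same argument with $x_1$ and $x_2$ swapped gives $D_2(x_2,x_1)\le M_{21}$, and by symmetry of $D_2$ this is $D_2(x_1,x_2)\le M_{21}$. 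Hence $D_2(x_1,x_2)\le M$.

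\emph{Step 2: $D_2(x_1,x_2)\ge M$.} Let $(c_1,c_2)$ attain the minimum in Definition~\ref{jointdef}, so that $c_1+c_2=D_2(x_1,x_2)$ and $\Phi(x_1,c_1)\cap\Phi(x_2,c_2)\neq\emptyset$. Consider the case $c_1\le c_2$. Set $c=c_1$ and $c'=c_2-c_1\ge 0$. Then $\Phi(x_1,c)\cap\Phi(x_2,c+c')\neq\emptyset$, so $D_2[c](x_1,x_2)\le c'$. It follows that
\[
M_{12}\le 2c+c' = 2c_1+(c_2-c_1)=c_1+c_2=D_2(x_1,x_2).
\]
In the case $c_1>c_2$, take $c=c_2$ and $c'=c_1-c_2$, and read the intersection as $\Phi(x_2,c)\cap\Phi(x_1,c+c')\neq\emptyset$. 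This gives $D_2[c](x_2,x_1)\le c'$, and the same computation yields $M_{21}\le D_2(x_1,x_2)$. In either case $M\le D_2(x_1,x_2)$.

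Combining Steps 1 and 2 proves the theorem. The argument is elementary. The only points that need care are the asymmetry of $D_2[c]$, which is exactly why both orderings must appear in \eqref{minc}, and checking that the decomposition of a minimizing pair $(c_1,c_2)$ as $(c,c+c')$ always produces $c'\ge0$. Choosing $c=\min(c_1,c_2)$ and selecting the ordering accordingly takes care of both.
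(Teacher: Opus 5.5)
Your proof is correct and follows essentially the same route as the paper. Both directions rest on writing a feasible pair $(c_1,c_2)$ as $(c,c+c')$ with $c=\min(c_1,c_2)$ and picking the orientation of $D_2[c]$ to match. The only difference is presentation: you argue pointwise, fixing $c$ or a minimizing pair, where the paper manipulates chains of minima. This also lets you skip the paper's unnecessary detour of restricting to $0\le c\le\tau$ and then extending back to all $c$ via Theorem~\ref{foundation}.
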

\vspace{2mm}
\begin{IEEEproof}
First, we prove that 
\begin{align*}
D_2(x_1,x_2)\leq\min\Big(\min\limits_{  c }\{2c+D_2[c](x_2,x_1)\},\ \min\limits_{  c }\{2c+D_2[c](x_1,x_2)\}\Big).
\end{align*}
Consider
\begin{align*}
    D_2&(x_1,x_2)\\&=\min\limits_{ c_1,c_2\geq 0}\{c_1+c_2: \Phi(x_1,c_1)\cap\Phi(x_2,c_2)\neq\emptyset\}\\
    &\leq \min\limits_{ c_2\geq c_1\geq 0}\{c_1+c_2: \Phi(x_1,c_1)\cap\Phi(x_2,c_2)\neq\emptyset\}\\
    &=\min\limits_{ c\geq 0\atop c'\geq 0}\{2c+c': \Phi(x_1,c)\cap\Phi(x_2,c+c')\neq\emptyset\}\\
    &=\min\limits_{ c\geq 0}\min\limits_{ c'\geq 0}\{2c+c': \Phi(x_1,c)\cap\Phi(x_2,c+c')\neq\emptyset\}\\
    &\leq\min\limits_{ 0\leq c\leq \tau}\min\limits_{ c'\geq 0}\{2c+c': \Phi(x_1,c)\cap\Phi(x_2,c+c')\neq\emptyset\}\\
    &=\min\limits_{ 0\leq c\leq \tau}\left\{2c+\min\limits_{ c'\geq 0}\{c': \Phi(x_1,c)\cap\Phi(x_2,c+c')\neq\emptyset\}\right\}\\
    &=\min\limits_{ 0\leq c\leq \tau}\left\{2c+D_2[c](x_1,x_2)\right\}.
\end{align*}
Since $D_2$ is symmetrical, by exactly the same argument, we have
\[D_2(x_1,x_2)= D_2(x_2,x_1)\leq \min\limits_{0\leq c\leq \tau}\{2c+D_2[c](x_2,x_1)\}\]
Thus we see that 
\begin{align*}
    D_2(x_1,x_2)\leq\min\Big(\min\limits_{ 0\leq c \leq\tau}\{2c+D_2[c](x_1,x_2)\},\
    \min\limits_{ 0\leq c \leq\tau}\{2c+D_2[c](x_1,x_2)\}\Big).
\end{align*}
Since $D_2[c](x_1,x_2)=0$ for $c\geq \tau$ (cf. Theorem \ref{foundation}), the $\min\limits_{0\leq c\leq \tau}$ above can be replaced by $\min\limits_c$. Therefore, 
\begin{align*}
    D_2(x_1,x_2)\leq\min\Big(\min\limits_{ c }\{2c+D_2[c](x_1,x_2)\},\ \min\limits_{  c }\{2c+D_2[c](x_1,x_2)\}\Big).
\end{align*}
Next we prove that 
\begin{align*}
    D_2(x_1,x_2)\geq\min\Big(\min\limits_{  c }\{2c+D_2[c](x_1,x_2)\},\  \min\limits_{  c }\{2c+D_2[c](x_1,x_2)\}\Big).
\end{align*}
Consider 
\begin{align*}
    D_2&(x_1,x_2)\\&=\min\limits_{ c_1,c_2\geq 0}\{c_1+c_2: \Phi(x_1,c_1)\cap\Phi(x_2,c_2)\neq\emptyset\}\\
    &\geq \min\limits_{  c_1,c_2\geq 0}\{c_1+c_2: (\Phi(x_1,c_1)\cap\Phi(x_2,c_2)\neq\emptyset) \atop \vee\ (\Phi(x_1,c_2)\cap\Phi(x_2,c_1)\neq\emptyset)\}\\
     &= \min\limits_{  c_1\geq c_2\geq 0}\{c_1+c_2: (\Phi(x_1,c_1)\cap\Phi(x_2,c_2)\neq\emptyset) \atop \vee\ (\Phi(x_1,c_2)\cap\Phi(x_2,c_1)\neq\emptyset)\}.\\
\end{align*}
The last equality above is justified because for the set to be minimized, the quantity to be minimized (i.e., $c_1+c_2$) and the predicate specifying the set are both symmetrical in $c_1$ and $c_2$. Then,
\begin{align*}
    D_2&(x_1,x_2)\\&\geq\min\limits_{  c_1\geq c_2\geq 0}\{c_1+c_2: (\Phi(x_1,c_1)\cap\Phi(x_2,c_2)\neq\emptyset) \atop \vee\ (\Phi(x_1,c_2)\cap\Phi(x_2,c_1)\neq\emptyset)\}\\
    &=\min\limits_{  c\geq 0\atop c'\geq 0}\{2c+c': (\Phi(x_1,c+c')\cap\Phi(x_2,c)\neq\emptyset) \atop \vee\ (\Phi(x_2,c+c')\cap\Phi(x_1,c)\neq\emptyset)\}\\
     &=\min\limits_{  c\geq 0}\left\{2c+\min\limits_{c'\geq 0}\{c': (\Phi(x_1,c+c')\cap\Phi(x_2,c)\neq\emptyset)\right.\\&\left.\quad\quad\ \  \vee\ (\Phi(x_2,c+c')\cap\Phi(x_1,c)\neq\emptyset)\}\frac{ }{ }\right\}\\
     &=\min\limits_{c\geq 0}\Big\{ 2c+\min(\min\limits_{c'\geq 0}\{c':\Phi(x_1,c+c')\cap\Phi(x_2,c)\neq\emptyset \},\\  &\quad\quad\ \min\limits_{c'\geq 0}\{c':\Phi(x_2,c+c')\cap\Phi(x_1,c)\neq\emptyset \})\Big\}\\
     &=\min\Big(\min\limits_{c\geq 0}\{2c+\min\limits_{c'\geq 0}\{c':\Phi(x_1,c+c')\cap\Phi(x_2,c)\neq\emptyset\}\},\\
     &\quad\quad\  \min\limits_{c\geq 0}\{2c+\min\limits_{c'\geq 0}\{c':\Phi(x_2,c+c')\cap\Phi(x_1,c)\neq\emptyset\}\}\Big)\\
     &= \min\Big(\min\limits_{  c\geq 0 }\{2c+D_2[c](x_1,x_2)\},\\&\ \quad\quad\ \min\limits_{   c\geq 0 }\{2c+D_2[c](x_2,x_1)\}\Big).
\end{align*}

Therefore,
\begin{align*}
    D_2(x_1,x_2)
    \geq\min\Big(\min\limits_{   c}\{2c+D_2[c](x_1,x_2)\},
    \  \min\limits_{    c }\{2c+D_2[c](x_2,x_1)\}\Big).
\end{align*}
Combining the two parts above, we see that  
\begin{align*}
    \begin{split}
    D_2(x_1,x_2)=\min\Big(\min\limits_{   c }\{2c+D_2[c](x_1,x_2)\},\  
    \min\limits_{   c }\{2c+D_2[c](x_2,x_1)\}\Big).
    \end{split}
\end{align*}   
    
\end{IEEEproof}
\vspace{3mm}

\begin{corollary}
\label{xincoro}
    Let $(\mathcal{C},F)$ be a generalized network code. Then \[d_2^{\min}=\min\limits_{c}\{2c+d_2^{\min}[c]\}.\] 
\end{corollary}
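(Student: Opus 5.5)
The plan is to minimize the identity of Theorem~\ref{d2'coro1} over all pairs of distinct codewords and then interchange the order of the two minimizations.

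By definition \eqref{d2}, $d_2^{\min}=\min_{x_1\neq x_2}D_2(x_1,x_2)$. For each distinct pair, Theorem~\ref{d2'coro1} expresses $D_2(x_1,x_2)$ as the minimum of $\min_c\{2c+D_2[c](x_1,x_2)\}$ and $\min_c\{2c+D_2[c](x_2,x_1)\}$. As $(x_1,x_2)$ ranges over all ordered pairs of distinct codewords, so does $(x_2,x_1)$. Hence the outer minimum of the second term over all distinct pairs equals the outer minimum of the first term. This gives
\[
d_2^{\min}=\min_{x_1\neq x_2}\min_{c\ge 0}\{2c+D_2[c](x_1,x_2)\}.
\]

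Next I would swap the two minima. Both are minima of a nonnegative integer-valued function over the product of the finite set of ordered distinct pairs and the set of nonnegative integers $c$, so the order does not matter:
\[
d_2^{\min}=\min_{c\ge 0}\min_{x_1\neq x_2}\{2c+D_2[c](x_1,x_2)\}=\min_{c\ge0}\{2c+d_2^{\min}[c]\}.
\]
The last equality holds because $2c$ does not depend on the pair, and $d_2^{\min}[c]$ is defined as the minimum of $D_2[c](x_1,x_2)$ over distinct $x_1,x_2$.

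The only step needing care is that these minima are attained, since $c$ ranges over an infinite set. This follows from Theorem~\ref{foundation}: $D_2[c](x_1,x_2)=0$ for $c\ge\tau(x_1,x_2)$, so $2c+D_2[c]$ is increasing in $c$ beyond $\tau$, and each inner minimum over $c$ is effectively over a finite range. More simply, a nonempty set of nonnegative integers always has a minimum, so the interchange of minima (taking infima over a product set in either order) is valid and nothing is lost. I expect no real obstacle; the argument is a bookkeeping consequence of Theorem~\ref{d2'coro1} together with the symmetry of ranging over ordered pairs.
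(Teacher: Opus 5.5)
Your proposal is correct and follows the paper's approach: the paper's proof simply minimizes both sides of (\ref{minc}) over all pairs of distinct codewords. Your argument spells out the two bookkeeping steps that the paper leaves implicit, namely that the second term collapses into the first because $d_2^{\min}[c]$ is a minimum over ordered pairs, and that the two minima can be interchanged. For that interchange, rely on the well-ordering argument you give rather than finiteness, since $\mathcal{C}$ need not be finite in this framework.
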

\vspace{2mm}
\begin{IEEEproof}
    This is obtained by minimizing the LHS and RHS of (\ref{minc}) over all $x_1,x_2\in\mathcal{C}.$ 
\end{IEEEproof}
\vspace{3mm}

\begin{theorem}
\label{d2'coro2}
    Let $(\mathcal{C},F)$ be a generalized network code where $F$ is error-linear. Then for all $ x_1,x_2\in\mathcal{C}$, $x_1\neq x_2$ and $0\leq c\leq \tau$,  \[2c+D_2[c](x_1,x_2)=D_2(x_1,x_2).\]
\end{theorem}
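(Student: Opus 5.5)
The plan is to prove the two inequalities $2c+D_2[c](x_1,x_2)\ge D_2(x_1,x_2)$ and $2c+D_2[c](x_1,x_2)\le D_2(x_1,x_2)$ separately. Write $d=D_2(x_1,x_2)$. Since $F$ is error-linear, Theorem~\ref{d_2theo} gives $d=D_0(x_1,x_2)=D_1(x_1,x_2)$, and $d\ge 1$ because $x_1\neq x_2$.

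\emph{Lower bound.} This direction needs no linearity. By Theorem~\ref{d2'coro1}, $D_2(x_1,x_2)$ is a minimum over all $c$ of quantities that include $2c+D_2[c](x_1,x_2)$. Hence $2c+D_2[c](x_1,x_2)\ge d$ for every $c\ge 0$.

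\emph{Upper bound.} This is where error-linearity is used, following the splitting argument in the proof of Theorem~\ref{d0d1coincide}.
\begin{enumerate}
\item Since $D_1(x_1,x_2)=d$, there is $z\in\mathcal{E}$ with $w(z)=d$ and $F(x_1,0)=F(x_2,z)$, i.e., $f(x_1)=f(x_2)\oplus h(z)$.
\item For the given $c$ with $0\le c\le d$, apply Property~7) with $c_1=c$ and $c_2=d-c$. This gives $z=z_2\circ z_1$ with $w(z_1)=c$ and $w(z_2)=d-c$.
\item By the homomorphism property, $F(x_1,z_1^{-1})=f(x_1)\oplus h(z_1)^{-1}=f(x_2)\oplus h(z_2)=F(x_2,z_2)$.
\item By Property~6), $w(z_1^{-1})=c$. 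Therefore $\Phi(x_1,c)\cap\Phi(x_2,d-c)\neq\emptyset$.
\item Whenever $d-2c\ge 0$, this intersection is exactly the condition in Definition~\ref{d2'def1} with $c'=d-2c$. It follows that $D_2[c](x_1,x_2)\le d-2c$, and combined with the lower bound, $2c+D_2[c](x_1,x_2)=d$.
\end{enumerate}

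\emph{Main obstacle: the endpoint $c=\tau$.} The condition $d-2c\ge 0$ holds for all $0\le c\le\lfloor d/2\rfloor$. This covers the whole stated range $0\le c\le\tau$ when $d$ is even, because then $\tau=d/2$. When $d$ is odd, $\tau=(d+1)/2$ and $d-2\tau=-1$, so Step~5 does not apply.

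At that endpoint, Step~4 still gives $\Phi(x_1,\tau)\cap\Phi(x_2,\tau-1)\neq\emptyset$, hence $\Phi(x_1,\tau)\cap\Phi(x_2,\tau)\neq\emptyset$. So $D_2[\tau](x_1,x_2)=0$, in agreement with Theorem~\ref{foundation}. However, the left-hand side is then $2\tau=d+1\neq d$, and the lower-bound argument cannot close the gap. As worded, the identity therefore appears to fail at $c=\tau$ when $D_0(x_1,x_2)$ is odd.

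In the write-up I would first check whether some convention in the paper resolves this, for instance reading $c^*$ in place of $\tau$ in the range. If none does, I would state the argument for all $0\le c\le\tau$ with $d-2c\ge 0$ and record the odd-$d$ endpoint as the exceptional case, where the value is $d+1$.
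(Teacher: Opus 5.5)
Your proof is correct, and so is your diagnosis of the endpoint. As worded, the statement is false at $c=\tau$ whenever $D_0(x_1,x_2)$ is odd, and no convention in the paper rescues it. Because $D_2[c]$ minimizes over $c'\ge 0$ only, Theorem~\ref{foundation} gives $D_2[\tau](x_1,x_2)=0$. Hence $2\tau+D_2[\tau](x_1,x_2)=2\tau=D_0(x_1,x_2)+1$.

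A concrete instance is the binary repetition code $\{000,111\}$ with $F(x,z)=x+z$ and Hamming weight. It has $D_0=D_1=D_2=3$ and $\tau=2$, and $011\in\Phi(000,2)\cap\Phi(111,2)$, so $2\tau+D_2[\tau]=4\neq 3$. The right range is $0\le c\le c^*$. This is all the paper uses afterwards: the relation condition is applied at $c^*$, and condition C2 still holds because the extra value $g(\tau)=D_0+1$ does not change where the minimum is attained.

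Your route also differs from the paper's. The paper tries to prove $2c+D_2[c](x_1,x_2)=D_1(x_1,x_2)$ by two error-linearity arguments. It then uses $D_1(x_1,x_2)=D_1(x_2,x_1)$ together with Theorem~\ref{d2'coro1} to identify this constant with $D_2$.

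Its lower-bound half starts from $\Phi(x_1,c)\cap\Phi(x_2,c+D_2[c]-1)=\emptyset$. The definition only yields this when $D_2[c]\ge 1$, i.e.\ $c<\tau$. At $c=\tau$ with $D_0$ odd, that intersection is in fact nonempty (your Step~4), so the paper's argument fails precisely at your exceptional case. Even for $c<\tau$, that half tacitly needs Property~7) to write an arbitrary error of weight at most $2c+D_2[c]-1$ in the form $\tilde z'\circ\tilde z^{-1}$.

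You instead get the lower bound $2c+D_2[c](x_1,x_2)\ge D_2(x_1,x_2)$ for free from the definition of $D_2$; it holds for every code. You use error-linearity only once: you split a minimum-weight error realizing $D_1$ exactly as in Theorem~\ref{d0d1coincide}, with $D_1=D_2$ supplied by Theorem~\ref{d_2theo}. This is shorter and needs no symmetry detour. It also shows directly that the binding constraint is $c'=D_0-2c\ge 0$, i.e.\ $c\le c^*$.
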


\begin{IEEEproof}
    As discussed in the proof of Theorem \ref{d2'coro1}, the $\min\limits_c$ in (\ref{minc}) can be replaced by $\min\limits_{0\leq c\leq \tau}$. Thus to prove the theorem, it suffices to show that $2c+D_2[c](x_1,x_2)=2c+D_2[c](x_2,x_1)$ and that it is a constant for all $0\leq c\leq \tau$. We will use the same notations as in the proof of Theorem \ref{d_2theo}.
    
    Let $c_1,c_2 \leq \tau$ be two fixed but arbitrary distinct nonnegative integers. Assume that 
        \begin{align}\label{7}
        \begin{split}
            D_2[c_1](x_1,x_2)&=c_1' \\
            D_2[c_2](x_1,x_2)&=c_2'.
        \end{split}
        \end{align}
     According to Definition \ref{d2'def1}, we have 
    \begin{align*}
        \Phi(x_1,c_1)\cap \Phi(x_2,c_1+c_1')&\neq \emptyset\\
     \Phi(x_1,c_2)\cap \Phi(x_2,c_2+c_2')&\neq \emptyset.
    \end{align*} 
    So, there exists $ z_1,z_1',z_2,z_2'\in\mathcal{E}$, with $w(z_1)=c_1$, $ w(z_1')=c_1+c_1',\ w(z_2)=c_2,$ and $ w(z_2')=c_2+c_2',$ such that $\ f(x_1)\oplus h(z_1)=f(x_2)\oplus h(z_1')$ and $\ f(x_1)\oplus h(z_2)=f(x_2) \oplus h(z_2').$ Then by the error-linearity of $F$, we have 
    \begin{align}\label{8}
    \begin{split}
        f(x_1)&=f(x_2)\oplus h(z_1'\circ z_1^{-1})\\
        f(x_1)&=f(x_2)\oplus h(z_2'\circ z_2^{-1}).
    \end{split}
    \end{align}
     Since $w$ satisfies Properties 5), 6) and 7) in Section \ref{sec7}, we have
     \begin{align*}
         c_1'&\leq w(z_1'\circ z_1^{-1})\leq w(z_1')+w(z_1^{-1})=w(z_1')+w(z_1)=2c_1+c_1'\\
         c_2'&\leq w(z_2'\circ z_2^{-1})\leq w(z_2')+w(z_2^{-1})=w(z_2')+w(z_2)=2c_2+c_2'.
     \end{align*}
    It follows from (\ref{8}) and Definition \ref{def3} that $D_1(x_1,x_2)\leq 2c_1+c_1'$ and $D_1(x_1,x_2)\leq 2c_2+c_2'$.

    On the other hand, from (\ref{7}) and Definition \ref{d2'def1}, for all ${z}_1,\tilde{z}_1',\tilde{z}_2,\tilde{z}_2'\in\mathcal{E}$ with $w(\tilde{z}_1)=c_1,\ w(\tilde{z}_1')=c_1+c_1'-1,\ w(\tilde{z}_2)=c_2,$ and $ w(\tilde{z}_2')=c_2+c_2'-1$,  
     we have \[\Phi(x_1,c_1)\cap\Phi(x_2,c_1+c_1'-1)=\emptyset\] and \[\Phi(x_1,c_2)\cap\Phi(x_2,c_2+c_2'-1)=\emptyset.\] This implies that \[f(x_1)\neq f(x_2)\oplus h(\tilde{z}_1'\circ\tilde{z}_1^{-1})\] and \[f(x_1)\neq f(x_2)\oplus h(\tilde{z}_2'\circ\tilde{z}_2^{-1}).\] Since 
     \begin{align*}
         c_1'-1&\leq w(\tilde{z}_1'\circ\tilde{z}_1^{-1})\leq 2c_1+c_1'-1\\
         c_2'-1&\leq w(\tilde{z}_2'\circ\tilde{z}_2^{-1})\leq 2c_2+c_2'-1,
     \end{align*}
      together with Definition \ref{def3}, we have $D_1(x_1,x_2)\geq 2c_1+c_1'$ and $D_1(x_1,x_2)\geq 2c_2+c_2'$.

    Combining the two parts above, we have $D_1(x_1,x_2)= 2c_1+c_1'$ and $D_1(x_1,x_2)= 2c_2+c_2'$, giving $2c_1+c_1'=2c_2+c_2'$. Hence, $2c+D_2[c](x_1,x_2)$ is a constant. On the other hand, using the same method, we can also prove that $2c+D_2[c](x_2,x_1)$ is a constant, which is equal to $D_1(x_2,x_1)$. From Corollary $1$ and Lemma $6$ in \cite{WeightProperty}, we know that $D_1(x_1,x_2)=D_0(x_1,x_2)$ and $D_1(x_2,x_1)=D_0(x_2,x_1)$ when $F$ is error-linear. Since $D_0$ is symmetrical, we obtain $D_1(x_1,x_2)=D_1(x_2,x_1)$, so that $2c+D_2[c](x_1,x_2)=2c+D_2[c](x_2,x_1)$, and they are constant for all $0\leq c \leq \tau$. Together with Theorem \ref{d2'coro1}, we see that $D_2(x_1,x_2)=2c+D_2[c](x_1,x_2)$ for all $0\leq c\leq \tau$.
    
\end{IEEEproof}
\vspace{2mm}

\begin{corollary}\label{coromini}
    Let $(\mathcal{C},F)$ be a generalized network code where $F$ is error-linear. Then for all $0\leq c\leq \tau$, we have $2c+d^{\min}_2[c]=d^{\min}_2$.
\end{corollary}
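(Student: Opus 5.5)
The plan is to reduce the statement to the pairwise identity in Theorem~\ref{d2'coro2} and then minimize over codeword pairs. Here $\tau$ should be read as the value attached to a pair achieving the minimum distance, i.e. $\tau=\left\lfloor\frac{d_0^{\min}+1}{2}\right\rfloor$. By Corollary~\ref{d_2co} we have $d_0^{\min}=d_2^{\min}$ when $F$ is error-linear, and since $D_0=D_2$ pairwise (Theorem~\ref{d_2theo}), a pair minimizing $D_2$ also minimizes $D_0$. So this reading of $\tau$ is consistent.

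\emph{Step 1 (pairwise lower bound valid for all $c\geq 0$).} Fix distinct $x_1,x_2\in\mathcal{C}$.
\begin{itemize}
\item For $0\leq c\leq\tau(x_1,x_2)$, Theorem~\ref{d2'coro2} gives $2c+D_2[c](x_1,x_2)=D_2(x_1,x_2)$.
\item For $c\geq\tau(x_1,x_2)$, Theorem~\ref{foundation} gives $D_2[c](x_1,x_2)=0$. Hence $2c+D_2[c](x_1,x_2)=2c\geq 2\tau(x_1,x_2)=2\left\lceil\frac{D_0(x_1,x_2)}{2}\right\rceil\geq D_0(x_1,x_2)=D_2(x_1,x_2)$, using Theorem~\ref{d_2theo} for the last equality.
\end{itemize}
Thus $2c+D_2[c](x_1,x_2)\geq D_2(x_1,x_2)$ for every $c\geq0$, with equality whenever $c\leq\tau(x_1,x_2)$.

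\emph{Step 2 (minimize over pairs).} Taking the minimum over distinct pairs, Step 1 gives $2c+d_2^{\min}[c]=\min_{x_1\neq x_2}\{2c+D_2[c](x_1,x_2)\}\geq\min_{x_1\neq x_2}D_2(x_1,x_2)=d_2^{\min}$ for every $c$. For the reverse inequality, choose a pair $(\hat x_1,\hat x_2)$ with $D_2(\hat x_1,\hat x_2)=d_2^{\min}$. Then $\tau(\hat x_1,\hat x_2)=\left\lfloor\frac{d_0^{\min}+1}{2}\right\rfloor=\tau$, so for $0\leq c\leq\tau$ the equality case of Step 1 yields $2c+d_2^{\min}[c]\leq 2c+D_2[c](\hat x_1,\hat x_2)=d_2^{\min}$. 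The two inequalities together give the corollary.

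\emph{Main obstacle.} The only delicate point is that $\tau$ depends on the pair, while $d_2^{\min}[c]$ minimizes over all pairs, including pairs with smaller $\tau$ for which Theorem~\ref{d2'coro2} does not apply directly. Step 1 is there to handle this: for such pairs the term $2c$ alone already dominates $D_2$, so they cannot push the minimum below $d_2^{\min}$. The identity $2\tau\geq D_0$, which follows from $\tau=\lceil D_0/2\rceil$, is exactly what makes that work.
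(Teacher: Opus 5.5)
Your overall structure is the right way to pass from Theorem~\ref{d2'coro2} to minimum distances: a pairwise identity, a lower bound valid for every $c$ that neutralizes pairs with smaller $\tau$, and a minimizing pair for the reverse inequality. It is also more careful than the paper, which states the corollary without proof and without addressing the pair-dependence of $\tau$.

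The gap is at the endpoint $c=\tau$, where your two bullets in Step~1 overlap and contradict each other. By Theorem~\ref{foundation}, $D_2[\tau](x_1,x_2)=0$, so $2\tau+D_2[\tau](x_1,x_2)=2\tau=2\lceil D_0(x_1,x_2)/2\rceil$. Whenever $D_0(x_1,x_2)$ is odd, this equals $D_0(x_1,x_2)+1$, not $D_2(x_1,x_2)=D_0(x_1,x_2)$. So the identity you import from Theorem~\ref{d2'coro2} is false at $c=\tau$ for odd distance. That theorem's proof uses $\Phi(x_1,c)\cap\Phi(x_2,c+D_2[c](x_1,x_2)-1)=\emptyset$, which the definition of $D_2[c]$ only supplies when $D_2[c](x_1,x_2)\geq 1$, i.e., when $c<\tau$.

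The corollary inherits the defect and is false as stated. Take the binary repetition code $\mathcal{C}=\{000,111\}$ over the classical channel $F(x,z)=x+z$ with the Hamming weight, which is error-linear. Then $d_0^{\min}=d_2^{\min}=3$ and $\tau=2$, but $110\in\Phi(000,2)\cap\Phi(111,2)$. Hence $d_2^{\min}[2]=0$ and $2\cdot 2+d_2^{\min}[2]=4\neq 3$. Your Step~2 therefore breaks exactly at $c=\tau$ when $d_2^{\min}$ is odd, and no argument can give the full range $0\leq c\leq\tau$.

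The correct range is $0\leq c\leq\lfloor d_2^{\min}/2\rfloor$, i.e., $c\leq c^*$ of a minimizing pair. Your argument proves this with two changes:
\begin{enumerate}
\item In Step~1, claim equality only for $c\leq c^*(x_1,x_2)$. Your lower bound for all $c$ is correct as written.
\item In Step~2, use $c^*(\hat x_1,\hat x_2)=\lfloor d_2^{\min}/2\rfloor$ in place of $\tau$.
\end{enumerate}

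A cleaner route avoids Theorem~\ref{d2'coro2} altogether. Using Properties 4)--7) exactly as in the proof of Theorem~\ref{d_2theo}, one shows that $\Phi(x_1,a)\cap\Phi(x_2,b)\neq\emptyset$ if and only if $a+b\geq D_1(x_1,x_2)$. For the ``if'' direction, take $z$ with $F(x_1,0)=F(x_2,z)$ and $w(z)=D_1(x_1,x_2)$, split it via Property 7) into pieces of weight at most $a$ and at most $b$, and invert one piece using Property 6). Hence $2c+D_2[c](x_1,x_2)=\max\bigl(2c,D_1(x_1,x_2)\bigr)$. Minimizing over pairs gives $2c+d_2^{\min}[c]=\max(2c,d_2^{\min})$, which equals $d_2^{\min}$ precisely when $2c\leq d_2^{\min}$.
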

\vspace{3mm}

As proved in Theorem \ref{d_2theo}, if $F$ is an error-linear generalized network channel, then $D_0=D_1=D_2$. Next, we obtain a sufficient condition for $D_0=D_1=D_2$ when $\mathcal{C}$ is nonlinear. \\

\begin{theorem}
\label{uniform coro}
    Let $(\mathcal{C},F)$ be a generalized network code. If 
    \begin{align}\label{last1}
        D_2(x_1,x_2)=2c^*+D_2[c^*](x_1,x_2)=D_1(x_1,x_2)
    \end{align}
     for all $ x_1,x_2\in\mathcal{C}$ with $x_1\neq x_2$, then $D_0(x_1,x_2)=D_1(x_1,x_2)=D_2(x_1,x_2)$, and $D_0,D_1,D_2$ are all metrics.
\end{theorem}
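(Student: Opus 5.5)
The plan is to get the metric property from Theorem~\ref{d1d2metric} and the equality $D_0=D_2$ from Theorem~\ref{d0d1d2'} combined with Theorem~\ref{d_2theo}.

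\emph{Step 1: $D_1=D_2$ and both are metrics.} Hypothesis (\ref{last1}) gives $D_1(x_1,x_2)=D_2(x_1,x_2)$ for all distinct $x_1,x_2\in\mathcal{C}$. When $x_1=x_2$, both distances are $0$ by nonnegativity. Hence $D_1=D_2$ on all of $\mathcal{C}\times\mathcal{C}$, and Theorem~\ref{d1d2metric} shows that $D_1$ and $D_2$ are metrics.

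\emph{Step 2: $D_0=D_2$.} Theorem~\ref{d_2theo} already gives $D_0(x_1,x_2)\geq D_2(x_1,x_2)$ for every generalized network code, so only the reverse inequality is needed. For distinct $x_1,x_2$, Theorem~\ref{d0d1d2'} gives
\[
D_0(x_1,x_2)=2c^*+\min\bigl(D_2[c^*](x_1,x_2),D_2[c^*](x_2,x_1)\bigr)\leq 2c^*+D_2[c^*](x_1,x_2).
\]
By (\ref{last1}), the right-hand side equals $D_2(x_1,x_2)$, so $D_0(x_1,x_2)\leq D_2(x_1,x_2)$. The same $c^*$ works in both orders because $c^*$ is symmetric in its arguments, as noted after its definition. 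For $x_1=x_2$, all three distances are $0$. Therefore $D_0=D_1=D_2$, and $D_0$ is a metric because it coincides with the metric $D_2$.

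I expect no real obstacle. The only points needing care are that the hypothesis is stated only for distinct pairs, so the diagonal must be handled separately as above, and that the hypothesis is used in the direction $(x_1,x_2)$ only, which suffices because we need just an upper bound on the minimum in Theorem~\ref{d0d1d2'}.
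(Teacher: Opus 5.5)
Your proposal is correct and follows essentially the same route as the paper: $D_0\ge D_2$ from Theorem~\ref{d_2theo}, then $D_0\le 2c^*+D_2[c^*](x_1,x_2)=D_2(x_1,x_2)$ from Theorem~\ref{d0d1d2'} together with the hypothesis, and finally the metric property from Theorem~\ref{d1d2metric}. You also handle the diagonal case $x_1=x_2$ explicitly; this is needed because the hypothesis covers only distinct pairs while Theorem~\ref{d1d2metric} is stated for all pairs, and the paper leaves this detail implicit.
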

\vspace{2mm}
\begin{IEEEproof}
    Assume that (\ref{last1}) holds for all $x_1,x_2\in\mathcal{C}$ such that $x_1\neq x_2$. Then 
    
     \begin{align*}
     \begin{split}
         D_2(x_1,x_2)&=2c^*+D_2[c^*](x_1,x_2)\\
        &\geq 2c^*+ \min(D_2[c^*](x_1,x_2),D_2[c^*](x_2,x_1)).
     \end{split}
     \end{align*}
     Since the RHS of the inequality above is equal to $D_0(x_1,x_2)$ by Theorem \ref{d0d1d2'}, we obtain 
     \begin{equation*}
         D_2(x_1,x_2)\geq D_0(x_1,x_2).
     \end{equation*}
    Together with Theorem \ref{d_2theo}, which says that \[D_2(x_1,x_2)\leq D_0(x_1,x_2),\] we obtain 
    \begin{equation}\label{last3}
        D_2(x_1,x_2)=D_0(x_1,x_2).
    \end{equation} 
    It then follows from (\ref{last1}) that $D_1(x_1,x_2)=D_2(x_1,x_2)=D_0(x_1,x_2)$. Together with Theorem \ref{d1d2metric}, we see that $D_0,D_1,D_2$ are all metrics. Hence, the theorem is proved.
\end{IEEEproof}
\vspace{3mm}

The sufficient condition in Theorem \ref{uniform coro} involves a relation among $D_1$, $D_2$ and $D_2[c]$. In Theorem \ref{function}, we will establish another sufficient condition for $D_0=D_1=D_2$ that involves only $D_2[c]$. We first prove the following lemma.
\vspace{2mm}
\begin{lemma}\label{d2'equ}
    Let $(\mathcal{C},F)$ be a generalized network code. For any $x_1,x_2\in \mathcal{C}$, the following two conditions are equivalent:
    \renewcommand{\theenumi}{S\arabic{enumi}}
     \begin{enumerate}
        \item \label{sta1} $D_2[c^*](x_1,x_2)\leq 1$ and $D_2[c^*](x_2,x_1)\leq 1$.
        \item \label{sta2} $D_2[c^*](x_1,x_2)=D_2[c^*](x_2,x_1)$.
     \end{enumerate}
\end{lemma}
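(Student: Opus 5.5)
We prove the equivalence of \ref{sta1} and \ref{sta2} by relying entirely on Lemma~\ref{lemma1}, splitting according to the parity of $D_0(x_1,x_2)$. Throughout, we use that $c^*=c^*(x_1,x_2)=c^*(x_2,x_1)$ by the symmetry of $D_0$. This is what allows us to compare $D_2[c^*](x_1,x_2)$ and $D_2[c^*](x_2,x_1)$ at the same value of $c$.

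\emph{Case 1: $D_0(x_1,x_2)$ is even.} By the first bullet of Lemma~\ref{lemma1}, $D_2[c^*](x_1,x_2)=D_2[c^*](x_2,x_1)=0$. Hence both \ref{sta1} and \ref{sta2} hold, and they are trivially equivalent.

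\emph{Case 2: $D_0(x_1,x_2)$ is odd.} By the second bullet of Lemma~\ref{lemma1}, $\min\bigl(D_2[c^*](x_1,x_2),D_2[c^*](x_2,x_1)\bigr)=1$, so both quantities are at least $1$ and at least one of them equals $1$.

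For \ref{sta1}$\Rightarrow$\ref{sta2}: if both quantities are at most $1$, then both lie in $\{1\}$. They are therefore equal.

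For \ref{sta2}$\Rightarrow$\ref{sta1}: if the two quantities are equal, each equals their common minimum, which is $1$. Hence both are at most $1$.

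Combining the two cases gives the lemma. The only point that needs care is the remark that Lemma~\ref{lemma1} is applied with the same $c^*$ in both orderings of the pair. Beyond that the argument is routine, and I expect no real obstacle.
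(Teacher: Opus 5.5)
Your proof is correct and follows essentially the same route as the paper. Both split on the parity of $D_0(x_1,x_2)$ and use Lemma~\ref{lemma1}: in the even case both quantities are $0$, and in the odd case their common minimum of $1$ forces each of S1 and S2 to make both quantities equal to $1$. Your explicit remark that $c^*(x_1,x_2)=c^*(x_2,x_1)$ is the same symmetry observation the paper makes right after defining $c^*$.
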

\begin{IEEEproof}
     First we prove that \ref{sta1}) implies \ref{sta2}). 
     Obviously, $D_0(x_1,x_2)$ is either odd or even. Recall Lemma \ref{lemma1}. If $D_0(x_1,x_2)$ is odd, then \[\min(D_2[c^*](x_1,x_2),D_2[c^*](x_2,x_1))=1.\] Together with \ref{sta1}), we obtain that
     \begin{equation}\label{deng5}
         D_2[c^*](x_1,x_2)=D_2[c^*](x_2,x_1)=1.
     \end{equation}
     On the other hand, if $D_0(x_1,x_2)$ is even, we have 
     \begin{equation}\label{deng6}
         D_2[c^*](x_1,x_2)=D_2[c^*](x_2,x_1)=0.
     \end{equation}
     Combining (\ref{deng5}) and (\ref{deng6}), we have proved that \ref{sta1}) implies \ref{sta2}).
     

    Next, we prove that \ref{sta2}) implies \ref{sta1}). Obviously, by Lemma~\ref{lemma1}, if \ref{sta2}) is true and $D_0(x_1,x_2)$ is even, then \[D_2[c^*](x_1,x_2)=D_2[c^*](x_2,x_1)=0,\] and if $D_0(x_1,x_2)$ is odd, then \[D_2[c^*](x_1,x_2)=D_2[c^*](x_2,x_1)=1.\] In both cases, \ref{sta1}) is true. Hence, we have proved the lemma.
    
\end{IEEEproof}
\vspace{3mm}




\begin{theorem}\label{function}
    Let $(\mathcal{C},F)$ be a generalized network code. If for all $x_1,x_2\in\mathcal{C}$ with $x_1\neq x_2$, the conditions 
\renewcommand{\theenumi}{C\arabic{enumi}}
     \begin{enumerate}
        \item \label{con1} $D_2[c^*](x_1,x_2)\leq 1$
        \item \label{con2} ${g}(c)\triangleq 2c+D_2[c](x_1,x_2)$ for $0\leq c \leq \tau$ achieves its minimum at $c=0$ and $c=c^*$
     \end{enumerate}
     are both satisfied, then $D_0=D_1=D_2$ and they are all metrics.
\end{theorem}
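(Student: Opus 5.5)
The plan is to reduce Theorem~\ref{function} to Theorem~\ref{uniform coro}. That is, for every pair $x_1\neq x_2$ we verify
\[D_2(x_1,x_2)=2c^*+D_2[c^*](x_1,x_2)=D_1(x_1,x_2),\]
after which Theorem~\ref{uniform coro} yields $D_0=D_1=D_2$ and that all three are metrics.

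The key observation is that conditions \ref{con1} and \ref{con2} are assumed for \emph{all} ordered pairs. They therefore hold for $(x_1,x_2)$ and for $(x_2,x_1)$. Since $c^*$ and $\tau$ are symmetric in $x_1,x_2$, both ordered pairs use the same range $0\le c\le\tau$.

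\textbf{Step 1.} Applying \ref{con1} to both ordered pairs gives condition \ref{sta1} of Lemma~\ref{d2'equ}. Hence $D_2[c^*](x_1,x_2)=D_2[c^*](x_2,x_1)$. Combining this with Theorem~\ref{d0d1d2'}, both quantities $2c^*+D_2[c^*](x_1,x_2)$ and $2c^*+D_2[c^*](x_2,x_1)$ equal $D_0(x_1,x_2)$.

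\textbf{Step 2.} By \ref{con2} for the pair $(x_1,x_2)$, the minimum of $g$ is attained at both $c=0$ and $c=c^*$. Since $D_2[0]=D_1$ (Theorem~\ref{foundation}), this gives
\[\min_{0\le c\le\tau}\{2c+D_2[c](x_1,x_2)\}=D_1(x_1,x_2)=2c^*+D_2[c^*](x_1,x_2)=D_0(x_1,x_2).\]
Likewise, \ref{con2} for the pair $(x_2,x_1)$ gives the corresponding minimum equal to $D_1(x_2,x_1)=D_0(x_1,x_2)$.

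The range restriction $c\le\tau$ is harmless. For $c\ge\tau$ we have $D_2[c]=0$ by Theorem~\ref{foundation}, so $2c+D_2[c]=2c\ge 2\tau$, and $2\tau$ is already attained at $c=\tau$. Hence the minimum over all $c$ equals the minimum over $0\le c\le\tau$, and we may apply Theorem~\ref{d2'coro1}:
\[D_2(x_1,x_2)=\min\bigl(D_0(x_1,x_2),D_0(x_1,x_2)\bigr)=D_0(x_1,x_2).\]

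\textbf{Step 3.} Together, Steps 1 and 2 give $D_2=2c^*+D_2[c^*](x_1,x_2)=D_1$, which is exactly the hypothesis (\ref{last1}) of Theorem~\ref{uniform coro}, and the theorem follows. Alternatively, we could conclude directly: we now have $D_0=D_1=D_2$ on all pairs, and Theorem~\ref{d1d2metric} gives that they are metrics.

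\textbf{Main obstacle.} The delicate point is Step 2, namely making sure that the minimum in Theorem~\ref{d2'coro1}, taken over all $c$, coincides with the minimum of $g$ over $[0,\tau]$. This is settled by the monotonicity remark above. A second delicate point is that the symmetric version of \ref{con2} (for the pair $(x_2,x_1)$) is genuinely needed, or at least that $D_1(x_2,x_1)\ge D_0$; this bound also follows from $D_1\ge D_2$ in Theorem~\ref{d_2theo} once $D_2=D_0$ is known. Failing that, we can use only the single inequality $D_2\ge\min(\cdot)$, where each term of the minimum is bounded below: for the $(x_1,x_2)$ term by \ref{con2}, and for the $(x_2,x_1)$ term by Theorem~\ref{d2ged0}-type bounds.
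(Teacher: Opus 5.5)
Your argument is correct and is essentially the paper's proof: Lemma~\ref{d2'equ} with Theorem~\ref{d0d1d2'} gives $D_0(x_1,x_2)=2c^*+D_2[c^*](x_1,x_2)$, condition (C2) on both ordered pairs with Theorem~\ref{foundation} and Theorem~\ref{d2'coro1} gives $D_1=D_2=D_0$, and Theorem~\ref{d1d2metric} finishes. Routing through Theorem~\ref{uniform coro} is redundant but harmless, and your check that the minimum over $0\le c\le\tau$ equals the minimum over all $c$ fills a step the paper glosses over. The fallback in your last paragraph is unneeded because the hypothesis covers both ordered pairs, and it would not work anyway, since Theorem~\ref{d2ged0} only yields $D_2\ge\lceil D_0/2\rceil$ rather than $D_2\ge D_0$.
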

\begin{IEEEproof} 
   Assume that \ref{con1}) and \ref{con2}) hold for all $x_1,x_2\in\mathcal{C}$ with $x_1\neq x_2.$ Since \ref{con1}) holds for all $x_1,x_2\in\mathcal{C}$, $x_1\neq x_2$, for a specific such pair $x_1$ and $x_2$, we have both $D_2[c^*](x_1,x_2)\leq 1$ and $D_2[c^*](x_2,x_1)\leq 1.$ Then by Lemma \ref{d2'equ}, we have \[D_2[c^*](x_1,x_2)=D_2[c^*](x_2,x_1).\]   Together with Theorem \ref{d0d1d2'}, we obtain 
   \begin{equation}\label{fangcheng}
       D_0(x_1,x_2)=2c^*+D_2[c^*](x_1,x_2).
   \end{equation}
   On the other hand, from \ref{con2}), we have 
   \begin{equation}\label{fangcheng2}
       g(c^*)=g(0)=\min\limits_c \{2c+D_2[c](x_1,x_2)\}.
   \end{equation}
   From Theorem \ref{foundation}, we have \[D_1(x_1,x_2)=D_2[0](x_1,x_2)=g(0).\] Together with (\ref{fangcheng2}), we obtain 
   \begin{equation}\label{fangcheng3}
       D_1(x_1,x_2)=g(0)=g(c^*)=2c^*+D_2[c^*](x_1,x_2).
   \end{equation}
   So, combining (\ref{fangcheng}), (\ref{fangcheng2}) and (\ref{fangcheng3}), we obtain
    \begin{equation}\label{deng1}
        D_1(x_1,x_2)=D_0(x_1,x_2)=\min\limits_c \{2c+D_2[c](x_1,x_2)\}
    \end{equation}
    and in the same way,
    \begin{equation}\label{deng2}
        D_1(x_2,x_1)=D_0(x_2,x_1)=\min\limits_c \{2c+D_2[c](x_2,x_1)\}.
    \end{equation}
    Since $D_0$ is symmetrical, from (\ref{deng1}) and (\ref{deng2}), we obtain 
\begin{align*}
    \min\limits_c \{2c+D_2[c](x_1,x_2)\}=\min\limits_c \{2c+D_2[c](x_2,x_1)\}.
\end{align*}
It then follows from Theorem \ref{d2'coro1} that
\begin{align}\label{deng3}
    D_2(x_1,x_2)=\min\limits_c \{2c+D_2[c](x_1,x_2)\}.
\end{align}
Combining (\ref{deng1}) and (\ref{deng3}), we obtain
    \begin{align*}
        D_0(x_1,x_2)=D_1(x_1,x_2)=D_2(x_1,x_2).
    \end{align*}
 Together with Theorem \ref{d1d2metric}, $D_0,D_1, D_2$ are all metrics. Hence, the theorem is proved.
\end{IEEEproof}
\vspace{3mm}

In Theorem \ref{uniform coro} and Theorem \ref{function}, we obtain two sufficient conditions for $D_0=D_1=D_2$. On the one hand, in Theorem \ref{uniform coro}, we consider a relation among $D_1, D_2$ and $D_2[c]$. On the other hand, in Theorem \ref{function}, we regard $2c+D_2[c]$ as a function of $c$, namely $g(c)$, and express $D_0, D_1, D_2$ in terms of $g(c)$. In the subsequent discussions, we will call the sufficient condition in Theorem \ref{uniform coro} the \emph{relation condition} and the sufficient condition in Theorem \ref{function} the \emph{function condition}.

Next, we prove that the function condition is no weaker than the relation condition.
\vspace{1mm}
\begin{corollary}\label{zuihou}
    Let $(\mathcal{C},F)$ be a generalized network code. The function condition is no weaker than the relation condition, i.e., for all $x_1,x_2\in\mathcal{C}$ with $x_1\neq x_2$, \ref{con1}) and \ref{con2}) imply (\ref{last1}).
\end{corollary}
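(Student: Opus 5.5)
The plan is to fix a pair $x_1,x_2\in\mathcal{C}$ with $x_1\neq x_2$ and assume that \ref{con1}) and \ref{con2}) hold for every such pair. From these I will derive the two equalities in (\ref{last1}), namely $D_2(x_1,x_2)=2c^*+D_2[c^*](x_1,x_2)$ and $2c^*+D_2[c^*](x_1,x_2)=D_1(x_1,x_2)$. Essentially all the work is already contained in the proof of Theorem~\ref{function}, so the corollary amounts to extracting intermediate equalities from that argument rather than only its final conclusion $D_0=D_1=D_2$.

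\emph{Second equality.} By Theorem~\ref{foundation}, $D_1(x_1,x_2)=D_2[0](x_1,x_2)=g(0)$. Condition \ref{con2}) says that $g$ attains its minimum over $0\le c\le\tau$ at both $c=0$ and $c=c^*$, so $g(0)=g(c^*)=2c^*+D_2[c^*](x_1,x_2)$. This requires $c^*\le\tau$, which holds because $\lfloor D_0/2\rfloor\le\lfloor (D_0+1)/2\rfloor$. Hence $D_1(x_1,x_2)=2c^*+D_2[c^*](x_1,x_2)$, which is (\ref{fangcheng3}).

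\emph{First equality.} Since \ref{con1}) is assumed for all ordered pairs, it also holds for $(x_2,x_1)$, so Lemma~\ref{d2'equ} gives $D_2[c^*](x_1,x_2)=D_2[c^*](x_2,x_1)$. Theorem~\ref{d0d1d2'} then yields $D_0(x_1,x_2)=2c^*+D_2[c^*](x_1,x_2)$. Applying the same reasoning to the swapped pair (using \ref{con2}) for $(x_2,x_1)$ and the symmetry of $D_0$ and $c^*$) gives $\min_c g_{12}(c)=D_0(x_1,x_2)=\min_c g_{21}(c)$, where $g_{12}(c)=2c+D_2[c](x_1,x_2)$ and $g_{21}(c)=2c+D_2[c](x_2,x_1)$. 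Theorem~\ref{d2'coro1} says $D_2(x_1,x_2)=\min\bigl(\min_c g_{12}(c),\,\min_c g_{21}(c)\bigr)$, so $D_2(x_1,x_2)=\min_c g_{12}(c)=g(c^*)=2c^*+D_2[c^*](x_1,x_2)$.

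Combining the two equalities gives (\ref{last1}). The only delicate step is the symmetric application in the first equality: the conditions must be invoked for the reversed pair $(x_2,x_1)$ as well. This is permitted because the hypothesis quantifies over all ordered distinct pairs. The restriction of the minimum in Theorem~\ref{d2'coro1} to $0\le c\le\tau$ is harmless, as already noted in its proof.
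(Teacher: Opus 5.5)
Your proposal is correct and follows the paper's own argument, which reads off (\ref{fangcheng3}), (\ref{deng1}) and (\ref{deng3}) from the proof of Theorem~\ref{function}; the paper's proof cites Theorem~\ref{foundation} at that point, but the proof of Theorem~\ref{function} is what is meant. Your derivation reproduces exactly those steps, including applying \ref{con1}) and \ref{con2}) to the reversed pair $(x_2,x_1)$ and the harmless passage between $\min_{0\le c\le\tau}$ and $\min_c$ in Theorem~\ref{d2'coro1}.
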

\begin{IEEEproof}
    Close examination of the proof of Theorem~\ref{foundation} reveals that (\ref{fangcheng3}), (\ref{deng1}) and (\ref{deng3}) together implies that $D_2(x_1,x_2)=2c^*+D_2[c^*](x_1,x_2)=D_1(x_1,x_2),$ i.e., the relation condition. Hence, the function condition is no weaker than the relation condition.
    
    
    
\end{IEEEproof}
\vspace{2mm}

We have proved in Theorem \ref{d_2theo} that $D_0=D_1=D_2$ for an error-linear generalized network channel. So, it is natural to ask whether the relation condition and the function condition are satisfied through error-linear generalized network channels. 

Consider a generalized network code $(\mathcal{C},F)$ where $F$ is error-linear and $x_1,x_2\in\mathcal{C}$. By Theorem \ref{d2'coro2}, since $0\leq c^*\leq\tau$, it is obvious that the relation condition is satisfied, i.e., \[D_2(x_1,x_2)=2c^*+D_2[c^*](x_1,x_2)=D_2[0](x_1,x_2).\] On the other hand, by Theorem \ref{d2'coro2}, we obtain that $g(c)=2c+D_2[c](x_1,x_2)$ is a constant for $0\leq c\leq \tau$. So \ref{con2}) is satisfied. By (\ref{fangcheng3}) and (\ref{deng1}), we obtain \[2c^*+D_2[c^*](x_1,x_2)=D_0(x_1,x_2),\] together with the fact that $c^*=\left \lfloor{\frac{D_0(x_1,x_2)}{2}}\right \rfloor$, it is obvious that $D_2[c^*](x_1,x_2)\leq 1$. So \ref{con1}) is satisfied. Hence, if $F$ is error-linear, then both the relation condition and the function condition are satisfied.  

\section{A Toy Example}\label{sec6}
\begin{figure}
    \centering
    \captionsetup{justification=centering}
    \includegraphics[width=0.3\textwidth]{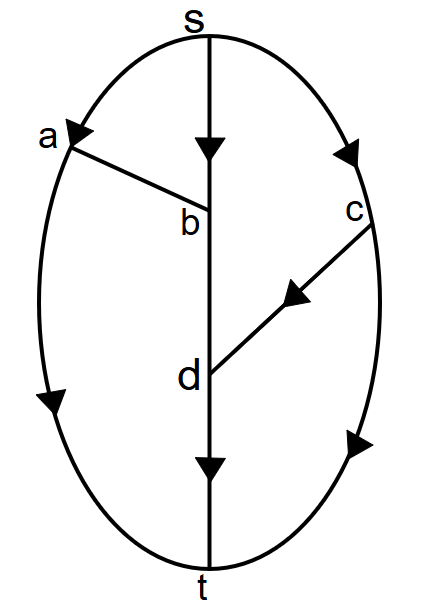}
    \caption{example network from \cite{WeightProperty}. $s$ is the source node and $t$ is the sink node.}
    
    \label{fig:enter-label}
\end{figure}
In this section, we give a toy example which was originally discussed in \cite{WeightProperty} to illustrate our results in the previous sections. The network is shown in Figure \ref{fig:enter-label}. As a generalized network code $(\mathcal{C},F)$, it is defined as the following:
\begin{enumerate}
    \item The codeset $\mathcal{C}$ is defined over $\mathbb{F}_3$ and $\mathcal{C}=\{(0,0,0),(1,1,1)\}$. 
    \item $F$ is defined by the network shown in Figure \ref{fig:enter-label}. Let $F_{(m,n)}$ be the value transmitted on channel (edge) $(m,n)$ with $m$ and $n$ being different nodes in Figure \ref{fig:enter-label}, and let $z_{(m,n)}$ be the error that occurs on channel $(m,n)$. Then $(F_{(s,a)},F_{(s,b)}, F_{(s,c)})$ is the codeword transmitted by the source node $s$, and $(F_{(a,t)},F_{(d,t)}, F_{(c,t)})$ is the vector received at the sink node~$t$.
\end{enumerate}
In this example, we have $F_{(s,a)}=F_{(a,b)}=F_{(a,t)}$ and $F_{(s,c)}=F_{(c,d)}=F_{(c,t)}$. $F_{(b,d)}$ is determined by $F_{(s,b)}$ and $F_{(a,b)}$ as follows\\

\begin{center}
\begin{tabular}{c|c|c|c|c|c|c|c|c|c}
    \toprule[1pt]
    
    $F_{(s,b)}$ & 0 & 0 & 0 & 1 & 1 & 1 & 2 & 2 & 2    \\ 
    \midrule[1pt]
    $F_{(a,b)}$ & 0 & 1 & 2 & 0 & 1 & 2 & 0 & 1 & 2     \\ 
    \midrule[2pt]
    $F_{(b,d)}$ & 0 & 0 & 0 & 2 & 1 & 0 & 0 & 0 & 0      \\ 
    \bottomrule[1pt]

\end{tabular}
\end{center}
\vspace{2mm}

and $F_{(d,t)}$ is determined by $F_{(b,d)}$ and $F_{(c,d)}$ as follows\\
\begin{center}
\begin{tabular}{c|c|c|c|c|c|c|c|c|c}
    \toprule[1pt]
    $F_{(b,d)}$ & 0 & 0 & 0 & 1 & 1 & 1 & 2 & 2 & 2    \\ \midrule[1pt]
    $F_{(c,d)}$ & 0 & 1 & 2 & 0 & 1 & 2 & 0 & 1 & 2     \\ \midrule[2pt]
    $F_{(d,t)}$ & 0 & 0 & 0 & 1 & 1 & 0 & 0 & 1 & 0      \\ \bottomrule[1pt]
\end{tabular}\\
\end{center}
\vspace{2mm}

Let $x_0=(0,0,0),\ x_1=(1,1,1)$. Then it is easy to obtain 
\begin{equation*}
\begin{split}
    \Phi(x_1,1)=\{(0,0,0),(1,0,0),(0,1,0),(0,0,1),(2,0,0),(0,2,0),(0,0,2)\}
\end{split}
\end{equation*}

 and
 
 \begin{equation*}
     \begin{split}
         \Phi(x_2,1)=\{(0,1,1),(1,0,1),
(1,1,0),(1,1,1),(1,0,2),(2,0,1),(2,1,1), (1,2,1),(1,1,2)\}.
     \end{split}
 \end{equation*} When two errors with value $2$ occur on $(s,a)$ and $(s,c)$, i.e., the error vector 
\[z=(z_{(s,a)},z_{(s,b)},z_{(s,c)},z_{(b,d)},z_{(c,d)},z_{(a,t)},z_{(b,t)},z_{(c,t)})\] is such that $z_{(s,a)}=z_{(s,c)}=2$ while the rest of the symbols are $0$, we have $F(x_0,0)=F(x_1,z)=(0,0,0)$. So,
\begin{align}
    \Phi(x_0,1)\cap \Phi(x_1,1) &=\emptyset.\label{64}\\
    \Phi(x_0,0)\cap \Phi(x_1,2)&\neq \emptyset \label{62}\\
    \Phi(x_0,1)\cap \Phi(x_1,2)&\neq \emptyset\label{63}
\end{align}
Note that (\ref{64}) implies
 \begin{align}
    \Phi(x_0,0)\cap \Phi(x_1,1)&=\emptyset\label{65}\\
    \Phi(x_0,1)\cap \Phi(x_1,0)&=\emptyset\label{66}
\end{align} because $\Phi(x_0,0)\subset \Phi(x_0,1)$ and $\Phi(x_1,0)\subset\Phi(x_1,1)$, respectively. According to the definitions of $d_0^{\min},\ d_1^{\min}$, and $d_2^{\min}$ (cf. (\ref{d0}), (\ref{d1}), and (\ref{d2}), respectively), (\ref{64}), and (\ref{63}) to (\ref{66}) imply that $d_0^{\min}=3$, while (\ref{62}), (\ref{65}), and (\ref{66}) imply $d_1^{\min}=2$ and $d_2^{\min}=2$. Thus, the code $\mathcal{C}$ can correct $\left \lfloor{\frac{d_0^{\min}-1}{2}}\right \rfloor=1$ error and can detect $d_1^{\min}-1=1$ error, where the latter is equivalent to $\mathcal{C}$ being $(0,1)$ joint error correcting.  Then 
\begin{itemize}
    \item $d_0^{\min}=3\geq 2=d_2^{\min}$ and $d_1^{\min}=2\geq 2=d_2^{\min}$, verifying Corollary \ref{d_2co};\vspace{1ex}
    \item $d_1^{\min}=2\geq 2=\left \lfloor{\frac{d_0^{\min}}{2}}\right \rfloor+1$, verifying Theorem \ref{d1d0relation};\vspace{1ex}
    \item $d_2^{\min}=2\geq 2= \left \lceil{\frac{d_{0}^{min}}{2}}\right \rceil$, verifying Corollary \ref{d2geod0co}.
\end{itemize} 


  Moreover, for $d_2^{\min}[c]$, (\ref{62}), (\ref{65}), and (\ref{66}) imply that $d_2^{\min}[0]=2$, while (\ref{64}) and (\ref{63}) imply that $d_2^{\min}[1]=1$. So, we have $2c+d_2^{\min}[c]=2$ for $c=0$, $2c+d_2^{\min}[c]=3$ for $c=1$ and $2c+d_2^{\min}[c]=4$ for $c=2$. Note that by Proposition \ref{prop5.13}, $d_2^{\min}[c]\leq d_2^{\min}[2]=0$ for all $c\geq 2$, implying that $d_2^{\min}[c]=0$ for all $c\geq 2.$ Therefore, it is not necessary to consider $2c+d_2^{\min}[c]$ for $c\geq 3$ for the purpose of calculating the minimum in Corollary \ref{xincoro}. Thus $d_2^{\min}=2=\min\{2,3,4\},$ verifying Corollary \ref{xincoro}.

\section{Conclusion}\label{sec9}


Yang {\em et al.} \cite{WeightProperty} discovered the surprising result that unlike a classical block channel code (linear or nonlinear) and a linear network code, a nonlinear network code in general requires two different minimum distances to characterize its error correction capability and its error detection capability. Inspired by the idea that the channel $F$ will affect the distances between the codewords, we established the scheme of generalized network channel and generalized network code. Then, we systematically defined the distances for error correction and error detection under the scheme of the generalized network code. We considered the joint error correction and detection in the generalized network code and obtained a complete characterization by introducing a distance and its refined version for this purpose. We have enhanced our understanding of the relation between various distances for error correction and detection in generalized network codes by proving some bounds on these distances.

The following are two interesting problems for further research: 
\begin{enumerate}
\item 
While all the distances discussed in this work coincide for linear network coding, what is the condition for them to coincide for nonlinear network coding?
\item 
What is the relation between these distances and the network topology?
\item How can the distance for erasure correction be defined? What are the properties of the erasure correction distance?

\end{enumerate}


%



\section*{Acknowledgment}

We would like to thank Prof. Shenghao Yang for his useful suggestions at the beginning of this research.

\begin{appendices}

\section{Proof of Property 7) in Section \ref{sec7} for the Rank Weight}
\label{appendixA}
Let $z\in\mathbb{F}_q^{m\times u}$ be an error vector with $\mbox{rk}(z)=c_1+c_2$, where $c_1,c_2$ are non-negative integers. Now we show that there exists $z_1,z_2\in\mathbb{F}_q^{m\times u}$ such that $\mbox{rk}(z_1)=c_1$, $\mbox{rk}(z_2)=c_2$, and $z_1+z_2=z.$
\vspace{2mm}

Assume that $z=(v_1,v_2,\cdots v_u)$, with $v_i\in\mathbb{F}_q^{m}$ for $i=1,2,\cdots,u$. Since $\mbox{rk}(z)=c_1+c_2$, there exist $c_1+c_2$ independent columns in $z$. Without loss of generality, we can assume that $v_1,v_2,\cdots,v_{c_1+c_2}$ are linearly independent. Then for $c_1+c_2+1\leq j\leq u$, we can write $v_j=f_j+g_j$, where $f_j\in\mbox{span}\{v_1,v_2,\cdots,v_{c_1}\}$, and $g_j\in\mbox{span}\{v_{c_1+1},v_{c_1+2},\cdots,v_{c_1+c_2}\}$. 
Accordingly, we can construct \[z_1=(v_1,v_2,\cdots,v_{c_1},\overbrace{0,0,\cdots,0}^{c_2},f_{c_1+c_2+1},f_{c_1+c_2+2},\cdots,f_{u}),\] and \[z_2=(\overbrace{0,0,\cdots,0}^{c_1},v_{c_1+1},v_{c_1+2},\cdots,v_{c_1+c_2},g_{c_1+c_2+1},g_{c_1+c_2+2},\cdots,g_{u}).\]  
Obviously, $\mbox{rk}(z_1)=c_1$, $\mbox{rk}(z_2)=c_2$, and $z_1+z_2=z$. Hence, property 7) in Section \ref{sec7} for the rank weight is proved.

\section{Proof of Property 7) in Section \ref{sec7} for the Sum-Rank Weight}
\label{appendixB}
Let $z=(\tilde{z_1}|\tilde{z_2}|\cdots|\tilde{z_l})\in\mathcal{E}$ be an error vector, $\tilde{z_i}\in\mathbb{F}_q^{m\times u_i}$ for $1\leq i\leq l$, and $\mbox{sr}(z)=c_1+c_2$, where $c_1,c_2$ are non-negative integers. Let $\mbox{rk}(\tilde{z_i})=r_i$. Then $\mbox{sr}(z)=\sum_{i=1}^lr_i= c_1+c_2$. Now we show that there exist $z_1,z_2\in\mathcal{E}$ such that $\mbox{sr}(z_1)=c_1$, $\mbox{sr}(z_2)=c_2$, and $z_1+z_2=z.$

For all $i=1,2,\cdots,l$, there exist $e_i,d_i\geq0$ such that $r_i=e_i+d_i$, $\sum_{i=1}^l e_i=c_1$ and $\sum_{i=1}^ld_i=c_2$. The result in Appendix A implies that
for all $i$,  there exist $\tilde{z}_1^i,\tilde{z}_2^i\in\mathbb{F}_q^{m\times u_i}$, such that $\mbox{rk}(\tilde{z}_1^i)=e_i$, $\mbox{rk}(\tilde{z}_2^i)=d_i$, and $\tilde{z_i}=\tilde{z}_1^i+\tilde{z}_2^i$.    

Let $z_1=(\tilde{z}_1^1|\tilde{z}_1^2|\cdots|\tilde{z}_1^l)$ and $z_2=(\tilde{z}_2^1|\tilde{z}_2^2|\cdots|\tilde{z}_2^l)$ so that $z_1+z_2=z$. Then $\mbox{sr}(z_1)=\sum_{i=1}^l\mbox{rk}(\tilde{z}_1^i)=\sum_{i=1}^l e_i=c_1$, and $\mbox{sr}(z_2)=\sum_{i=1}^l\mbox{rk}(\tilde{z}_2^i)=\sum_{i=1}^l d_i=c_2$. Hence, property 7) in Section \ref{sec7} for the sum-rank weight is proved.

\end{appendices}

\bibliographystyle{IEEEtran}
\bibliography{ref.bib}
\end{document}